\newtheorem{theorem}{Theorem}
\newtheorem{lemma}{Lemma}
\newtheorem{definition}{Definition}
\def\BibTeX{{\rm B\kern-.05em{\sc i\kern-.025em b}\kern-.08em
		T\kern-.1667em\lower.7ex\hbox{E}\kern-.125emX}}
\DeclareMathOperator*{\argmin}{arg\,min}
\begin{document}
	\setlength{\abovecaptionskip}{0pt}
	\setlength{\belowcaptionskip}{0pt}
	
	\title{State-Aware Timeliness in Energy Harvesting IoT Systems Monitoring a Markovian Source}
	\author{Erfan Delfani, George J. Stamatakis, and Nikolaos Pappas \thanks{E. Delfani and N. Pappas are with the Department of Computer and Information Science Link\"{o}ping University, Sweden, email: \{\texttt{erfan.delfani, nikolaos.pappas\}@liu.se}. G. Stamatakis is with the Institute of Computer Science, Foundation for Research and Technology - Hellas (FORTH), email: \texttt{gstam@ics.forth.gr}. A shorter version has been published in \cite{stamatakis2019control}.}}
	
	\maketitle
	
	\begin{abstract}
		In this study, we investigate the optimal transmission policies within an energy harvesting status update system, where the demand for status updates depends on the state of the source. The system monitors a two-state Markovian source that characterizes a stochastic process, which can be in either a \emph{normal} state or an \emph{alarm} state, with a higher demand for fresh updates when the source is in the alarm state. We propose a metric to capture the freshness of status updates for each state of the stochastic process by introducing two Age of Information (AoI) variables, extending the definition of AoI to account for the state changes of the stochastic process. We formulate the problem as a Markov Decision Process (MDP), utilizing a transition cost function that applies linear and non-linear penalties based on AoI and the state of the stochastic process. Through analytical investigation, we delve into the structure of the optimal transmission policy for the resulting MDP problem. Furthermore, we evaluate the derived policies via numerical results and demonstrate their effectiveness in reserving energy in anticipation of forthcoming alarm states.
	\end{abstract}

	\section{Introduction}
	Timely communication of status updates is critically essential for applications providing monitoring services in cyber-physical systems~\cite{yates2021aoi}. These applications form the foundation of the intelligent infrastructure enabled by the Internet of Things (IoT). Instances of such applications encompass, but are not restricted to, smart cities, intelligent factories and grids, advanced agriculture, parking and traffic control, e-Health, and environmental monitoring~\cite{yates2021aoi,abd2019role}. 
	
	A pivotal finding in the field indicated that metrics like throughput and delay do not adequately address the goal of timely status updating. In addressing this issue, the authors in~\cite{kaul2012real} introduced a novel metric known as the Age of Information (AoI).
	Since its introduction, the optimal determination of status update generation and transmission to minimize AoI metrics has garnered considerable attention from the research community \cite{SunTIT2017,sun2019sampling, stamatakis2020optimal}. The scope of AoI has been extended to encompass other metrics, including the Value of Information \cite{kosta2017age}, cost of update delay \cite{sun2018sampling}, Age of Synchronization \cite{zhong2018two}, non-linear AoI \cite{zheng2019closed}, Age of Incorrect Information \cite{maatouk2020age}, Version Age of Information \cite{yates2021age}, and Age of Actuation \cite{nikkhah2023age}.
	
	Another notable challenge in the field involves selecting a suitable energy source for remote sensors. With their finite lifespan, batteries pose the risk of high replacement costs, particularly when dealing with numerous sensors located in remote or inaccessible areas. To tackle this issue, energy harvesting (EH) technologies have been devised to provide the required power to remote sensors~\cite{akan2017internet}. Regardless of whether energy harvesting, batteries, or both are employed, the stored energy must be judiciously managed to ensure an adequate supply when most crucial.
	
	In\cite{lazy_timely}, the paper explores the optimization of transmitting updates from an EH source to a receiver, aiming to minimize the average age of updates. Similar studies can be found in \cite{ArafaGC2017,ArafaICC2018, WuYangTGCN18, farazi2018age,feng2018minimizing,arafa2019age,bacinoglu2019optimal,feng2021age}. 
    The paper \cite{abd2020reinforcement} explores a monitoring system where nodes are powered wirelessly and send updates to a central node to maintain data freshness. It aims to minimize the average AoI by optimizing energy transfer and update scheduling. Using deep reinforcement learning, the paper proposes an efficient solution and analyzes its properties. It also compares the optimal policy with one maximizing throughput and studies the impact of system parameters.
    In \cite{Krikidis2018}, the study examines the average Age of Information (AoI) in a wireless power transfer sensor network. Additionally, \cite{chen2021optimizing} investigates the interplay of throughput/delay and AoI in a two-user multiple access channel with a single energy harvesting source. In \cite{EH_DRL_AoI2019}, the study examines the average AoI for status updates from an EH transmitter with a finite-capacity battery. The research investigates optimal scheduling policies under known channel and EH statistics. In cases of unknown environments, the authors propose an adaptive reinforcement learning algorithm to learn system parameters and update policies in real-time.
	In \cite{leng2019minimizing}, the study focuses on a cognitive radio system with a secondary user as an EH sensor, deciding between spectrum sensing and status updating in each time slot. The sequential decision-making problem is framed as a Partially Observable MDP (POMDP) and solved using dynamic programming, with an exploration of the optimal policy's structural properties. 
	Another study \cite{xu2023optimal} tackles real-time IoT applications using EH sensors, aiming to minimize the Age of Correlated Information (AoCI) at the data fusion center. The approach involves formulating the dynamic status update as a POMDP and introducing a DRL algorithm to solve the problem.
	The study \cite{JayanthDist2023} focuses on optimizing wireless communication of stochastic process samples to minimize distortion at the destination while maintaining a specified AoI and cost of actions. It introduces a stationary randomized policy (SRP) solution and highlights challenges related to rapid source changes and channel states. Additionally, a constrained POMDP formulation for the problem has been defined.
	The article \cite{dong2020energy} optimizes IoT systems by minimizing AoI and distortion through effective policies, including save-and-transmit and fixed power transmission. Causal EH information is addressed with an MDP for optimal policy. The study reveals that the optimal transmit power is a bivalued function of the current age and distortion. The authors of \cite{gindullina2021age} study an EH monitoring node managing updates from diverse sources with different energy consumption and AoI values. The objective is to minimize average AoI through optimal actions (requesting an update from a source or staying idle) formulated as an MDP, with the optimal policy determined using the Value Iteration algorithm.
	In \cite{hatami2022demand}, the focus is on minimizing on-demand AoI in a multi-user IoT energy harvesting network, using an MDP formulation. The study proposes an iterative algorithm for optimal status updates, with a low-complexity alternative for scenarios with numerous sensors. In \cite{hatami2021aoi}, the problem is tackled without transmission constraints, employing a model-free Q-learning method within an MDP framework. \cite{holm2021freshness} introduces a pull-based communication model using the Age of Information at Query (QAoI) metric in an MDP, determining the optimal status updating policy for a monitoring scenario with periodic queries from a server to an EH sensor at an edge node. 
    The paper \cite{hu2023aoi} investigates online scheduling in wireless-powered communication networks for IoT devices. It focuses on minimizing the Expected Weighted Sum Age of Information (EWSAoI) by proposing a Max-Weight policy based on Lyapunov optimization theory. This policy schedules sensor nodes to transmit their data to a mobile edge server efficiently, considering wireless power transfer and channel fading effects.
    Additionally, The work \cite{DelfaniVage2023} examines and optimizes a real-time IoT network, considering energy harvesting, caching, and gossiping. It focuses on minimizing the average Version AoI in a destination gossiping network while managing energy constraints for the EH sensor and responding to network requests, utilizing the MDP framework. The work \cite{rafiee2023adaptive} deals with updating information efficiently for an EH IoT receiver that interacts with a variable-rate information source. It aims to minimize the average AoI by optimizing when the receiver turns on or off. The study uses the MDP framework to find optimal scheduling policies and introduces a state-adapted waiting policy.

	In this study, we demonstrate the close connection between the challenge of reserving energy for \emph{critical} use and the issue of ensuring timely status updates. Specifically, we examine an energy harvesting (EH) status update system that monitors a stochastic process with two states, a \emph{normal} state and an \emph{alarm} state. This framework encompasses systems where events occur with a certain probability at defined time intervals during normal operation, while the probability increases significantly during alarm operation. For instance, this scenario is applicable to networks, where the rate of packet arrivals during a denial of service attack contrasts with normal operation. Additionally, our focus is on systems where the demand for fresh status updates is considerably higher during alarm periods than in normal operation. To address this heightened demand, the system needs to account for the characteristics of the energy arrival process and strategically reserve energy when feasible.
	
	To the best of our knowledge this is the first work to consider an AoI-based status update system for a two-state stochastic process and study the impact of constrained energy resources on the optimal status update transmission policies.
	For an effective representation of the problem, we introduce two AoI variables, each corresponding to a state of the stochastic process. We expand the AoI definition to encompass scenarios where the state of the stochastic process changes without the monitoring application being informed of the change. Finally, our results illustrate how the optimal policy is influenced by the probabilities of energy harvesting, successful status update transmission, and the probability of the monitored process changing state from its current state.
	
	\section{System Model}
	\label{sec:systemModel}
	The system we consider is presented in Fig.~\ref{fig:systemDiagram} and comprises an Energy Harvesting sensor responsible for monitoring a stochastic process and sending status updates to a destination node, denoted as Rx. We assume that Rx is one hop away from the sensor, time is slotted, and each slot has a duration of T.
	\begin{figure}[h]
		\centering
		\includegraphics[width=4.9in]{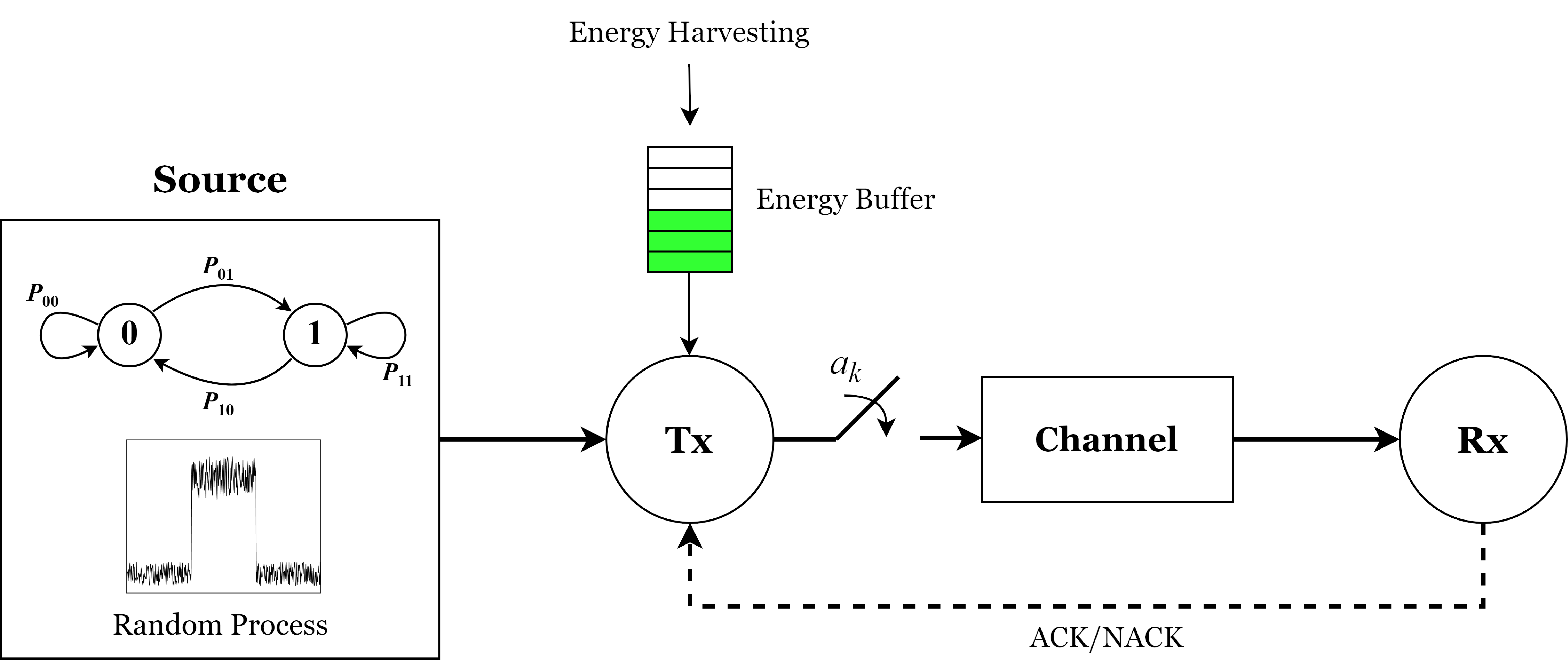}
		\caption{An EH status update system for a stochastic process with normal and alarm states.}
		\vspace{-20pt}
		\label{fig:systemDiagram}	
	\end{figure}
	At the beginning of each time slot, the stochastic process can exist in one of two states. The first state, $0$, indicates a \emph{normal} operational state. In contrast, the second state, $1$, signifies an \emph{alarm} operational state. An illustrative representation of such a process is presented in Fig.~\ref{fig:systemDiagram}. It is anticipated that a monitoring application for this stochastic process should deliver more frequent status updates during alarm periods.
	Let $\{Z_k\}$, where $k=0, 1, \dots$, represent the sequence of states of the stochastic process over time. We assume that the state of the stochastic process remains constant throughout a time slot. At the onset of the ($k+1$)-th time-slot, the state of the stochastic process transitions from $Z_k = z$ to $Z_{k+1} = z'$, governed by transition probabilities $P_{zz'}$, where $z, z' \in \{0,1\}$, as depicted in Fig.~\ref{fig:systemDiagram}.

	At the beginning of each time slot, the sensor generates a new status update and subsequently decides whether to transmit it to the destination. The sensor has an energy buffer capable of storing an integer number of energy units, with a maximum capacity of $E_{max}$ energy units. The sensor has a probability $P_e$ of harvesting an energy unit in a given time slot.
	We assume that each status update transmission consumes one energy unit, and no transmission is possible if the energy buffer is empty. For the purposes of this study, we do not consider energy costs associated with other sensor functions, such as sensing, processing, and data storage in memory. Each transmission has an independent probability of success, denoted as $P_s$, and this probability is unaffected by the outcomes of previous transmissions. Additionally, we assume that acknowledgment of a packet transmission occurs instantaneously.
	
	We employ the AoI metric to quantify the timeliness of status updates reaching the destination. AoI, as defined in \cite{kaul2012real}, represents the time elapsed since the generation of the last successfully decoded status update. However, our study must also account for state changes in the stochastic process. The destination remains unaware of any such state change until it receives a fresh status update. Additionally, the sensor node faces the challenge of deciding when to transmit a new status update, considering both the increased (or decreased) demand during alarm (normal) states of the stochastic process and the limited energy resources in the buffer. The sensor must leverage its knowledge of the stochastic process's state changes and the AoI value at the destination to achieve this objective.
	
	To address this scenario, we employ two distinct AoI variables, each corresponding to a different state of the stochastic process. We represent the AoI for the $z$-th state of the stochastic process at time $k$ as $\Delta_k^z, z\in\{0,1\}$.
	Additionally, we use the sequence of time indices where a state change occurs, denoted as $\{\tau_n: Z_{\tau_n} \neq Z_{\tau_n - 1},\ n=1, 2, ...\}$, and define $\tau_N$ as the time index of the most recent state change for the stochastic process by time $k$, where $N = \max \{n : \tau_n < k\}$.
	Lastly, let $Z_k^d$ represent the state that the destination \emph{knows} as the process's state at time $k$, indicating the state of the stochastic process included in the most recent status update received by the destination.
	The definition of AoI is then as follows,
	\begin{equation}
		\label{eq:AoiDefinition}
		\Delta_k^z = \begin{cases}
			\min\{k - U_k, \Delta_{max}^z\}, & \text{if $z = Z_k^d$}, \\
			\min\{k - \tau_N, \Delta_{max}^z\}, & \text{if $z \neq Z_k^d$ and $z = Z_k$}, \\
			0, & \text{if $z \neq Z_k^d$ and $z \neq Z_k$},
		\end{cases}
	\end{equation}
	where $U_k$ denotes the timestamp of the most recent packet received at the destination by time $k$, and $\Delta_{max}^z$ represents the maximum value of AoI associated with the highest level of staleness.
	
	The first branch of (\ref{eq:AoiDefinition}) applies to the AoI variable associated with the state of the stochastic process known at the destination by time $k$, aligning with the definition of AoI as presented in~\cite{kaul2012real}.
	The second branch of (\ref{eq:AoiDefinition}) is applicable in scenarios where \emph{one or more state changes} have occurred, leading to the current state of the stochastic process differing from the one recognized at the destination ($Z_k \neq Z_k^d$). In such instances, the AoI for $z = Z_k$, denoted as $\Delta_k^z$, is defined as the time elapsed since the last state change ($\tau_N$).
	Finally, the third branch of the equation applies for AoI $\Delta_k^z$ when $z$ is neither the state known by the destination nor the currently active state, i.e., the state known to the destination at the $k$-th time slot, $Z_k^d$, is equal to the actual state of the stochastic process. In such a case, the state $z$ that is not currently active, i.e., $z \neq Z_k^d$, is assigned an AoI value of zero.
	
	An illustration in Fig.~\ref{fig:AoiDefinition} depicts the evolution of $\Delta_k^0$ and $\Delta_k^1$ over time. Status updates occur at $t_k$ ($k \geq 0$), reaching the destination at time points denoted as $t_k'$. $\tau_c$ ($c \geq 0$) indicates times of stochastic process state changes.
	At $k=2$, the destination receives a status update indicating $Z(0) = 0$, causing $\Delta^0$ to increase while $\Delta^1$ remains zero. At $k=5\ (\tau_1)$, the process shifts to $Z(5)=1$, incrementing both $\Delta_k^0$ and $\Delta_k^1$ following the first and second branches of (\ref{eq:AoiDefinition}), respectively. At $k=9$, the destination receives an update confirming a state change at $\tau_1=5$, resetting $\Delta_k^0$ to zero according to the third branch of (\ref{eq:AoiDefinition}), and continuing the increment of $\Delta_k^1$ following the first branch. Finally, at $k=12\ (\tau_2)$, another state change occurs, repeating the process.
	
	\begin{figure}[htb!]
		\centering
		\begin{subfigure}[b]{\textwidth}
			\centering
			\includegraphics[scale=0.75]{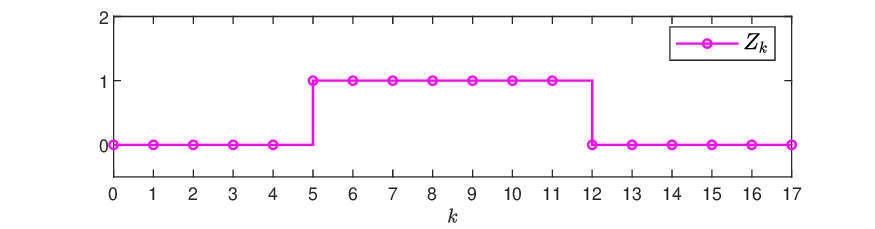}
			\caption{Time evolution of the stochastic process' state.}
			\label{fig:randomProcessStateEvolution}	
		\end{subfigure} 
		\begin{subfigure}[b]{\textwidth}
			\centering
			\includegraphics[scale=0.75]{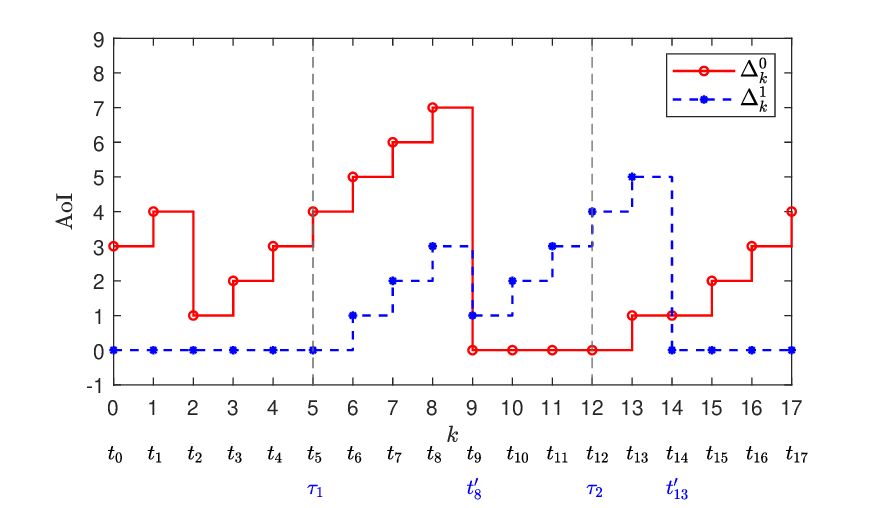}
			\caption{Time evolution of $\Delta_k^0$ and $\Delta_k^1$ for the $Z_k$ presented above.}
			\label{fig:AoI_and_cost_function}
		\end{subfigure}
		\caption{The first sub-figure presents the time evolution of the stochastic process' state. The second sub-figure presents the evolution of the AoI for each state of the stochastic process.}
		\vspace{-20pt}
		\label{fig:AoiDefinition}
	\end{figure}
	
	By employing these two AoI variables, we will be able to formulate various metrics or cost (reward) functions related to the staleness (freshness) of a system in two distinct states with different demands. The subsequent section will further elucidate these metrics.
	
	\section{Problem Formulation}
	\label{sec:problemFormulation}
	In this section, we present the state, action, and random variable spaces of the system, as well as the system's transition and cost functions.
	
	\subsubsection*{States}
	At the beginning of the $k$-th time-slot the state of the system is represented by the following state vector,
	\begin{equation}
		s_k = [Z_k, Z_k^d, E_k, \Delta_k^0, \Delta_k^1]^T,
	\end{equation}
	where $Z_k\in\{0, 1\}$ represents the state of the stochastic process, $Z_k^d$ signifies the state known by the destination at time $k$, $E_k = \{0, 1, ..., E_{max}\}$ is the energy in the buffer, and $\Delta_k^z, z\in\{0,1\}$ is the AoI at the destination for the $z$-th state of the stochastic process, with $T$ denoting the transpose operator. The set of all system states is denoted as $S$.
	
	\subsubsection*{Actions}
	When at least one energy unit is in the buffer, the sensing node can choose to transmit a fresh status update or conserve energy for later use. The action taken by the sensing node is denoted as $a_k \in {0,1}$, where $0$ indicates not transmitting a status update, and $1$ indicates transmitting one. If the energy buffer is empty, the sensor is restricted to action $0$. We use $a_s^*$ to denote the optimal action in state $s$, $A$ for the set of all actions, and $A(s)$ to represent the set of permissible actions at state $s$.
	\subsubsection*{Random variables}
	Given the system's state and the sensor's action, a stochastic transition to a new state occurs, determined by three random variables. The first, $W_k^s \in \{0,1\}$, signifies the random event of a successful transmission over the noisy channel, assumed to happen with probability $P_s$. If the sensor opts not to transmit at time-slot $k$, $W_k^s$ is forced to be zero. The second variable, $W_k^e \in \{0,1\}$, represents the random event of an energy unit arrival, assumed to occur with probability $P_e$ during a time slot. The third, $W_k^z \in \{0,1\}$, denotes the new state of the random process, determined by transition probabilities presented in Fig.\ref{fig:monitoredSystem}. These random variables' values become known to the sensor at the end of the $k$-th time-slot, as typical in optimal control theory~\cite{B12}. Lastly, we assume independence among $W_k^s$, $W_k^e$, and $W_k^z$, with their values being independent of previous time slots and identically distributed across all time slots. The random column vector $W_k = [W_k^s, W_k^e, W_k^z]^T$ collectively refers to the system's random variables.
	
	\subsubsection*{System Dynamics}
	\label{sec:systemDynamics}
	Given the current state of the system $s_k=\left[Z_k,Z_k^d,E_k,\Delta_k^0,\Delta_k^1\right]^T$ and the action $a_k$, the next state of system $s_{k+1}=\left[Z_{k+1},Z_{k+1}^d,E_{k+1},\Delta_{k+1}^0,\Delta_{k+1}^1\right]^T$ is determined by the realization of random vector $W_k=\left[W_k^s,W_k^e,W_k^z\right]$. 
	More specifically the state of the stochastic process at the $(k+1)$-th time-slot is provided by the random variable $W_k^z$ whose value becomes known by the end of the $k$-th time-slot. 
	\begin{align}
		Z_{k+1}=W_k^z,
	\end{align}
	while the state of the stochastic process known by the destination assumes a new value only in the case of a successful status update transmission,
	\begin{align}
		Z_{k+1}^d=
		\begin{cases}
			Z_{k}^d & W_k^s=0, \\
			Z_{k} & W_k^s=1.
		\end{cases}
	\end{align}
	
	The energy stored in the energy buffer at the beginning of the $(k+1)$-th time-slot depends on whether the sensor transmitted a status update and an energy unit was harvested during the $k$-th time-slot,
	\begin{align}
		E_{k+1}=E_{k}+W_k^e-a_k.
	\end{align}
	
	Here, we present a recursive definition for $\Delta_{k+1}$, although the evolution of the AoI variables over time was described in (\ref{eq:AoiDefinition}),
	\begin{align}
		\Delta_{k+1}^z=
		\begin{cases}
			0 & (z \neq Z_k,\ z \neq Z_k^d,\ W_k^s=0) \text{ or } (z \neq Z_k,\ W_k^s=1), \\
			\min\left\{\Delta_{k}^z+1,\Delta_{max}\right\} & (z=Z_k=Z_k^d,\ W_k^s=0) \text{ or } (Z_k \neq Z_k^d,\ z\in\{0,1\},\  W_k^s=0), \\
			1 & (z = Z_k,\ W_k^s=1).
		\end{cases}
	\end{align}
	
	\subsubsection*{Transition Probabilities}
	The transition probability of the system can be represented by the total probability theorem as follows:
	\begin{align}
		\label{TransProbMain}
		P\left[s_{k+1}|s_{k},a_k\right] &= \sum_{W_k}{P\left[s_{k+1},W_k|s_{k},a_k\right]} \\ \notag
		&= \sum_{W_k}{P\left[s_{k+1}|s_{k},a_k,W_k\right]P\left[W_k|s_{k},a_k\right]} \\ \notag
		&= \sum_{[W_k^s,W_k^e,W_k^z]}{P\left[s_{k+1}|s_{k},a_k,W_k^s,W_k^e,W_k^z\right]P\left[W_k^s,W_k^e,W_k^z|s_{k},a_k\right]}.
	\end{align}
	
	We can simplify the conditional probabilities in \ref{TransProbMain}:
	\begin{align}
		\label{ConditionalTrans}
		P\left[s_{k+1}|s_{k},a_k,W_k^s,W_k^e,W_k^z\right]&=P\left[Z_{k+1},Z_{k+1}^d,E_{k+1},\Delta_{k+1}^0,\Delta_{k+1}^1|s_{k},a_k,W_k^s,W_k^e,W_k^z\right] \\ \notag
		& = P\left[Z_{k+1}|W_k^z\right]\times P\left[Z_{k+1}^d|Z_{k}^d,W_k^s\right]\times P\left[E_{k+1}|E_{k},a_k,W_k^e\right] \\ \notag &\times P\left[\Delta_{k+1}^0|Z_{k},Z_{k}^d,\Delta_{k}^0,W_k^s\right]\times P\left[\Delta_{k+1}^1|Z_{k},Z_{k}^d,\Delta_{k}^1,W_k^s\right],
	\end{align}
	\begin{align}
		\label{ConditionalRvs}
		P\left[W_k^s,W_k^e,W_k^z|s_{k},a_k\right]&=P\left[W_k^s|s_{k},a_k\right]P\left[W_k^e|s_{k},a_k\right]P\left[W_k^z|s_{k},a_k\right]\\ \notag
		&=P\left[W_k^s|E_k,a_k\right]P\left[W_k^e\right]P\left[W_k^z|Z_k\right],
	\end{align}
	where:
	
	{\small
		\begin{align}
			\label{EvolutionProbZ}
			P\left[Z_{k+1}|W_k^z\right]=
			\begin{cases}
				1 & Z_{k+1}=W_k^z,\\
				0 & \text{otherwise,}
			\end{cases}
		\end{align}
		\begin{align}
			\label{EvolutionProbZd}
			P\left[Z_{k+1}^d|Z_{k}^d,W_k^s\right]=
			\begin{cases}
				1 & Z_{k+1}^d=Z_{k}^d, W_k^s=0,\\
				1 & Z_{k+1}^d=Z_{k}, W_k^s=1,\\
				0 & \text{otherwise,}
			\end{cases}
		\end{align}
		\begin{align}
			\label{EvolutionProbE}
			P\left[E_{k+1}|E_{k},a_k,W_k^e\right]=
			\begin{cases}
				1 & E_{k+1} = E_{k}+W_k^e-a_k,\\
				0 & \text{otherwise,}
			\end{cases}
		\end{align}
		
		\begin{align}
			\label{EvolutionProbDz}
			P&\left[\Delta_{k+1}^z|Z_{k},Z_{k}^d,\Delta_{k}^z,W_k^s\right] \\ \notag
			&=
			\begin{cases}
				1 & \Delta_{k+1}^z=0,\ (z \neq Z_k,\ z \neq Z_k^d,\ W_k^s=0) \text{ or } (z \neq Z_k,\ W_k^s=1), \\
				1 & \Delta_{k+1}^z=\min\left\{\Delta_{k}^z+1,\Delta_{max}\right\},\ (z=Z_k=Z_k^d,\ W_k^s=0) \text{ or } (Z_k \neq Z_k^d,\ z\in\{0,1\},\  W_k^s=0), \\
				1 & \Delta_{k+1}^z=1,\ (z = Z_k,\ W_k^s=1).
			\end{cases}
		\end{align}
		\begin{align}
			\label{RvProbWs}
			P\left[W_k^s|E_k,a_k\right]=
			\begin{cases}
				1 & W_k^s=0,\ \left(a_k=0 \text{ or } E_k=0\right), \\
				P_s & W_k^s=1,\ a_k=1, \\
				1-P_s & W_k^s=0,\ a_k=1,
			\end{cases}
		\end{align}
		\begin{align}
			\label{RvProbWe}
			P\left[W_k^e\right]=
			\begin{cases}
				P_e & W_k^e=1, \\
				1-P_e & W_k^e=0,
			\end{cases}
		\end{align}
		\begin{align}
			\label{RvProbWz}
			P\left[W_k^z|Z_k\right]=
			\begin{cases}
				P_{00} & W_k^z=0,\ Z_k=0, \\
				P_{01} & W_k^z=1,\ Z_k=0, \\
				P_{10} & W_k^z=0,\ Z_k=1, \\
				P_{11} & W_k^z=1,\ Z_k=1.
			\end{cases}
		\end{align}
	}
	
	By substituting equations \eqref{EvolutionProbZ} to \eqref{EvolutionProbDz} into \eqref{ConditionalTrans}, and equations \eqref{RvProbWs} to \eqref{RvProbWz} into \eqref{ConditionalRvs}, the transition probability \eqref{TransProbMain} is determined. 
	
	\subsubsection*{Transition cost function}
	We define a general metric as the cost function as follows:
	\begin{align}
		\label{eq:metric}
		g(s_k, a_k, w_k) = \big(1-Z_k\big)\cdot f(\Delta_k^0) + Z_k \cdot h\left( \Delta_k^1\right),
	\end{align}
	
	where $f(\cdot)$ and $h(\cdot)$ are two real-valued functions defined on non-negative integers, with the condition that $h(\cdot)$ ages faster than $f(\cdot)$, i.e., $h(\Delta_k) \geq f(\Delta_k), \forall \Delta_k \in \{0,1,2,\cdots\}$.
	Here, for simplicity, we consider the linear and square functions for $f(\cdot)$ and $h(\cdot)$, respectively, where the cost associated with each state transition is given by,
	\begin{align}
		\label{eq:transitionCost}
		g(s_k, a_k, w_k) = g(s_k, a_k) = \big(1-Z_k\big)\cdot \Delta_k^0 + Z_k \cdot \big( \Delta_k^1\big)^2,
	\end{align}
	where $w_k$ is the realization of random vector $W_k$ at the $k$-th time-slot.
	From (\ref{eq:transitionCost}) we observe that when the stochastic process is in the normal state ($Z_k = 0$), the transition cost increases linearly with $\Delta_k^0$, while when the system is in the alarm state ($Z_k = 1$) the transition cost increases with the square of $\Delta_k^1$.
	Thus, the transition cost function captures the increased demand for status updates when $Z_k = 1$.
	
	\subsubsection*{Total cost function}
	We aim to minimize the cumulative cost over an infinite time span,
	\begin{equation} 
		\label{eq:cummulativeCostFunction}
		J_{\mu}(s_0) = \underset{N \to \infty}{\lim} \underset{\underset{k=0,1,\dots}{W_k,}}{\mathop{\mathbb{E}}} \left\lbrace \sum_{k=0}^{N-1} \gamma^k g(s_k, a_k, w_k) | s_0\right\rbrace,  
	\end{equation}
	where $s_0$ denotes the initial state of the system, the expectation $\mathbb{E}\lbrace \cdot \rbrace$ is computed based on the joint probability distribution of random variables $W_k$ for $k\in \{0, 1, \dots\}$, and $\gamma$ serves as a discount factor (where $0 < \gamma < 1$), indicating diminishing importance of induced cost over time. Lastly, let $\mu=\{u_0,u_1,u_2,\cdots,u_k,\cdots\}$ represent a deterministic policy mapping each state $x_k$ to a specific action $a_k=u_k (x_k)$ at each time slot $k$. 
	
	\emph{Our objective} is to obtain an optimal policy $\mu^*=\left\{u^*_0,u^*_1,u^*_2,\cdots\right\}$ that minimizes (\ref{eq:cummulativeCostFunction}).

	\section{Analytical Results}
	
	\subsection{Optimal Policy}
	\label{sec:optimalPolicy}
	
	The dynamic system outlined in section~\ref{sec:problemFormulation} is characterized by finite state, control, and probability spaces. State transitions rely on $s_k$, $a_k$, and $w_k$, independent of their previous values. Furthermore, the probability distribution of random variables remains constant over time. The cost linked to a state transition is bounded, and the cost function $J(\cdot)$ accumulates additively over time. These structural characteristics establish the considered dynamic system as a Markov Decision Process (MDP), where the state transition probabilities completely describe its dynamics. Specifically, the problem \eqref{eq:cummulativeCostFunction} is an infinite horizon discounted cost MDP problem with bounded cost per stage~\cite[Sec. 1.2]{B12}.
	For the MDP under consideration, given that $0<\gamma<1$, there exists an optimal stationary policy $\mu^*$
	which is characterized by Bellman's equation~\cite[Prop. 1.2.5, pg. 17]{B12}. Specifically, when the system is in state $s$, the optimal stationary policy $\mu^*$ always applies the same control $a^\ast(s)$ that minimizes (\ref{eq:cummulativeCostFunction}), i.e., 
	\begin{gather}
		\mu^* = \arg \underset{\mu \in \mathcal{M}}{\min} J_{\mu}(s), 
	\end{gather}
	where $a^\ast(s) = u^*(s) ,\ \text{for all $s \in S$}$, and $\mathcal{M}$ is the set of all policies.
	Let $V^\ast(s)=J^*(s)$ be the infinite horizon discounted cost attained when the optimal policy $\mu^*$ is applied and the system begins at state $s$. The optimal cost $V^\ast(s)$ and the optimal action $a^\ast(s)$ satisfy the Bellman's equation:
	\begin{align}
		\label{BellmanEqnCost}
		V^\ast(s) &= \min_{a\in\{0,1\}}\left\{\sum_{\tilde{s}\in S}{P(\tilde{s}|s,a)\left[g(s,a)+\gamma V^\ast(\tilde{s})\right]}\right\}, \forall s \in S,
		\\ \label{BellmanEqnAction}
		a^\ast(s) &= \argmin_{a\in\{0,1\}}\left\{\sum_{\tilde{s}\in S}{P(\tilde{s}|s,a)\left[g(s,a)+\gamma V^\ast(\tilde{s})\right]}\right\}, \forall s \in S,
	\end{align}
	where $s=\left[Z,Z^d,E,\Delta^0,\Delta^1\right]^T$, $\tilde{s}=\left[\tilde{Z}_k,\tilde{Z}_k^d,\tilde{E}_k,\tilde{\Delta}_k^0,\tilde{\Delta}_k^1\right]^T$ and $V(s)$ is the value function of the MDP problem. 
	
	Given that the transition cost $g(s, a)$ is bounded and that $0< \gamma < 1$, the operator, 
	\begin{equation}
		\label{eq:VI}
		(TV)(s) = \underset{a\in\{0,1\}}{\min} \left\{\sum_{\tilde{s}\in S} P(\tilde{s}|s,a)\left[g(s,a)+\gamma V(\tilde{s})\right] \right\},
	\end{equation} 
	is a contraction mapping~\cite[Assumption D, Prop. 1.2.1, pg. 14]{B12} and starting with an arbitrarily initialized vector $V(s), s \in S$, and repeatedly applying transformation $(TV)$ for all states $s \in S$ we attain the optimal cost $V^\ast$ and at the same time derive the optimal policy $\mu^\ast$ for all $s\in S$ according to~\cite[Prop. 1.2.1, pg. 14]{B12} which states that,
	\begin{equation}
		V^*(s) = \lim_{m\rightarrow\infty} (T^mV)(s), 
	\end{equation} 
	where $(T^mV)(s) = (T(T^{m-1}\dots(T^0 V))(s)$ and $(T^0V)(s) = V(s)$.
	(\ref{eq:VI}) is a formal description of the Value Iteration (VI) algorithm~\cite[Section 2.2, pg. 84]{B12}.

	\subsection{Threshold Policy}
    \begin{definition}
	   Policy $\mu$ is a threshold policy if for each combination of values for $Z$, $Z^d$, and $E$ there exists a threshold $\delta_T=(\Delta_T^0,\Delta_T^1)$ such that then sensor will transmit, i.e. $a(s)=1$, only if $\delta=(\Delta^0,\Delta^1) \geq (\Delta_T^0,\Delta_T^1)=\delta_T$, where $\geq$ is meant to hold element-wise.
    \end{definition}

    \begin{theorem}
	   An optimal policy of the MDP problem is a threshold policy.
    \end{theorem}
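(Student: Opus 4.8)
The plan is to establish the threshold structure through a monotonicity analysis of the value function, carried out along the Value Iteration recursion $V_{n+1}=TV_n$ with $T$ as in \eqref{eq:VI}. Throughout, fix a triple $(Z,Z^d,E)$ and write the state as $s=(Z,Z^d,E,\delta)$ with $\delta=(\Delta^0,\Delta^1)$. Put $Q(s,a)=\sum_{\tilde s}P(\tilde s\mid s,a)\big[g(s,a)+\gamma V^*(\tilde s)\big]$, so that the optimal policy transmits exactly when $\phi(\delta):=Q(s,0)-Q(s,1)\ge 0$ (breaking ties in favour of transmission, which is allowed since we only need \emph{an} optimal policy). It therefore suffices to prove that, for each fixed $(Z,Z^d,E)$, the function $\phi$ is non-decreasing in $\Delta^0$ and in $\Delta^1$; the transmit region is then an up-set in the product order on $\{0,\dots,\Delta_{max}\}^2$, and setting $\Delta_T^z$ equal to the smallest $z$-th coordinate appearing in that region (with $\delta_T=(\Delta_{max}{+}1,\Delta_{max}{+}1)$ when the region is empty, e.g.\ when $E=0$) yields a threshold with the stated property.

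The first step is to prove, by induction on $n$ starting from $V_0\equiv 0$, that $T$ preserves: (P1) $V_n(Z,Z^d,E,\cdot,\cdot)$ is non-decreasing in each AoI coordinate, and (P2) $V_n(Z,Z^d,\cdot,\Delta^0,\Delta^1)$ is non-increasing in $E$. For (P1) one compares $Q_{n+1}(s,a)$ with $Q_{n+1}(s',a)$ for states $s\le s'$ agreeing except in the AoI coordinates: the stage cost $g(s,a)=(1-Z)\Delta^0+Z(\Delta^1)^2$ is non-decreasing in $(\Delta^0,\Delta^1)$, and coupling the realizations of $(W^s,W^e,W^z)$ identically (their law does not depend on $\delta$) the successor of $s'$ dominates that of $s$ coordinatewise in AoI, because under $W^s=1$ both AoI's reset to values independent of the current $\delta$, while under $W^s=0$ (always the case when $a=0$) each AoI coordinate evolves either by the monotone map $\Delta\mapsto\min\{\Delta+1,\Delta_{max}\}$ or by a reset to $0$; applying the inductive hypothesis to $V_n$ and taking the minimum over $a$ gives (P1) for $V_{n+1}$. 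Property (P2) is obtained in the same way, coupling on $W^e$ and treating the action-set restriction at $E\in\{0,1\}$ separately. Since $T$ is the contraction recorded after \eqref{eq:VI}, $V^*$ inherits (P1) and (P2).

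Next I decompose $\phi$ by conditioning on $(W^e,W^z)$ and on the transmission outcome:
\begin{align}
\phi(\delta)=\gamma\!\!\sum_{W^e,W^z}\!\! P[W^e]\,P[W^z\mid Z]\,\Big[\big(V^*(s'_{0})-V^*(s'_{\mathrm f})\big)+P_s\big(V^*(s'_{\mathrm f})-V^*(s'_{\mathrm s})\big)\Big], \label{eq:phiDecomp}
\end{align}
where $s'_0,s'_{\mathrm f},s'_{\mathrm s}$ are the successors under $a=0$, under $(a=1,W^s=0)$, and under $(a=1,W^s=1)$ respectively; note that $s'_0$ and $s'_{\mathrm f}$ coincide except that $s'_0$ carries one more energy unit, while $s'_{\mathrm s}$ has $Z^d$ updated and its AoI's reset. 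By (P1), $V^*(s'_{\mathrm f})-V^*(s'_{\mathrm s})$ is non-decreasing in $\delta$ (the subtracted term no longer depends on $\delta$), so this piece pushes $\phi$ upward as $\delta$ grows. The remaining term $V^*(s'_0)-V^*(s'_{\mathrm f})=-\rho(\hat\delta)\le 0$, with $\hat\delta$ the incremented AoI and $\rho\ge 0$ by (P2), is the energy cost of transmitting, and controlling its dependence on $\delta$ is what forces a strengthened induction hypothesis: one carries through the Value Iteration a joint property of $V_n$ across the energy and AoI axes --- in essence, that the marginal cost of one fewer energy unit does not grow in $\delta$ faster than $P_s$ times the freshness gain from a reset --- so that \eqref{eq:phiDecomp} is non-decreasing in each coordinate of $\delta$. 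As an independent confirmation I would run an interchange argument in parallel: if the optimal policy transmits at $s$ but idles at some $s'\ge s$ with the same $(Z,Z^d,E)$, couple the two trajectories and swap their first decisions to obtain a policy with strictly lower cost started from $s'$, contradicting optimality.

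The main obstacle is precisely this coupled energy--age behaviour of $V^*$: the separate monotonicities (P1) and (P2) are routine, but the threshold property genuinely needs a mixed, second-order property relating how the value of spending an energy unit trades off against the freshness gained, and the delicacy is to state this invariant so that it is reproduced by $T$ in the presence of the truncation at $\Delta_{max}$, the three branches of the AoI recursion, and the update of $Z^d$ on a successful transmission. A secondary point, if one wants the transmit region to be the full rectangle $\{\delta\ge\delta_T\}$ rather than merely an up-set, is to exploit the separable structure of the dynamics --- the two AoI variables evolve independently given $(Z,Z^d)$ and the stage cost is additively separable in them with the $Z$-trajectory as the only coupling --- together with the reachability constraint that pins one AoI variable to $0$ whenever $Z=Z^d$; once the correct invariants are identified, their propagation through $T$ is a routine, if lengthy, case analysis over the realizations of $(W^s,W^e,W^z)$.
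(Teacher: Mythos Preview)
Your overall strategy coincides with the paper's: write $\phi(\delta)=Q(s,0)-Q(s,1)$ (the paper uses $\Delta V(s)=-\phi(\delta)/\gamma$), decompose it over the realizations of $(W^e,W^z,W^s)$, observe that the $P_s\big(V^*(s'_{\mathrm f})-V^*(s'_{\mathrm s})\big)$ piece is monotone in $\delta$ by (P1), and then control the remaining energy-cost term $V^*(s'_0)-V^*(s'_{\mathrm f})$ via a joint energy--age invariant propagated through Value Iteration. The paper follows exactly this route.

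The gap in your proposal is that you stop at the point that actually carries the proof. You say one must ``carry through the Value Iteration a joint property of $V_n$ across the energy and AoI axes'' and that ``the delicacy is to state this invariant so that it is reproduced by $T$'', but you never state it. The paper's contribution here is precisely to name the invariant (its Lemma~1): for $\delta^+\ge\delta^-$ and all $E\ge 1$,
\[
(1-P_s)\Big[V\big(Z,Z^d,E{-}1,\delta^+\big)-V\big(Z,Z^d,E{-}1,\delta^-\big)\Big]\;\le\;V\big(Z,Z^d,E,\delta^+\big)-V\big(Z,Z^d,E,\delta^-\big),
\]
and then to verify, by a four-case analysis on which action attains the minimum at the states $s_{E-1}^{-}$ and $s_{E}^{+}$, that this inequality is preserved by the Bellman operator. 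Your heuristic ``the marginal cost of one fewer energy unit does not grow in $\delta$ faster than $P_s$ times the freshness gain from a reset'' is in the right spirit but is not the inequality that closes the induction; the correct statement compares AoI-increments of $V$ at adjacent energy levels, weighted by the failure probability $1-P_s$. Without this lemma (and its proof), neither your monotonicity argument nor the sketched interchange argument goes through, because the energy coupling can, in principle, make spending an energy unit more costly at larger $\delta$ --- ruling this out is exactly the content of Lemma~1.
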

    \begin{proof}
    	The proof can be found in appendix \ref{ProofTheorem1}.
    \end{proof}

	\section{Numerical Results}
	In this section, we conduct a numerical evaluation of the optimal infinite horizon discounted cost, $J^*(\cdot)$, under different system parameter configurations. For consistency across all experiments, we fix the discount factor at $\gamma = 0.99$, set both AoI variables' upper bounds ($\Delta_{max}^0$ and $\Delta_{max}^1$) to 10, and for ease of interpretation, assume a constant initial state $s_0$ for the system. More specifically, we assume the deployment of the sensor during the normal state $(Z_k = 0)$ of the random process, with this information known to the destination $(Z_k^d = 0)$. The energy buffer starts empty $(E_k = 0)$, and the initial state $s_0$ is defined as $[0, 0, 0, 1, 0]^T$ with AoI counters $\Delta_k^0$ and $\Delta_k^1$ set to $1$ and $0$, respectively.
	
	\begin{figure}[!htb]
		\centering
		\includegraphics[scale=0.6]{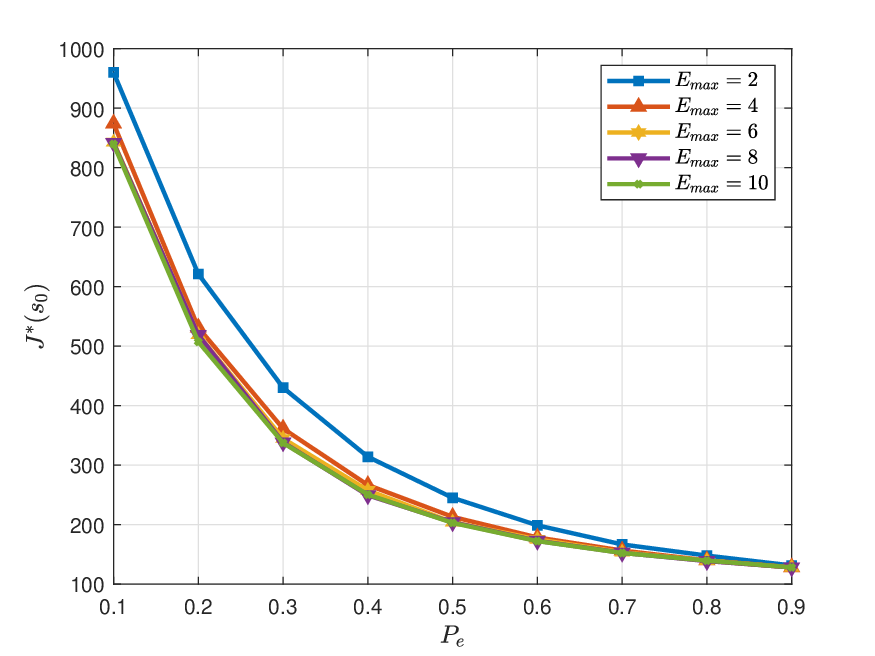}
		\caption{Impact of energy buffer's capacity, $E_{max}$, on $J^*(s_0)$.}
		\label{fig:ehProbVsCostVsBufferSize}
	\end{figure}
	In Fig.~\ref{fig:ehProbVsCostVsBufferSize}, we display $J^*(s_0)$ across various capacities of the energy buffer $E_{max}$ while the energy harvesting probability $P_e$ varies. In all experiments depicted in Fig.~\ref{fig:ehProbVsCostVsBufferSize}, the state transition probabilities of the stochastic process were configured as follows:
	\begin{equation}
		P_z = \begin{bmatrix}
			\label{eq:transitionProbabilityMatrix}
			0.9 & 0.1 \\
			0.2 & 0.8 
		\end{bmatrix},
	\end{equation}
	and the transmission success probability $P_s$ was set to $0.8$. 
	Fig.~\ref{fig:ehProbVsCostVsBufferSize} illustrates that being in an environment with a high probability of energy harvesting and having a larger capacity energy buffer contributes positively to reducing $J^*(s_0)$. The results in Fig.~\ref{fig:ehProbVsCostVsBufferSize} also indicate that the influence of the energy buffer's capacity on $J^*(s_0)$ becomes negligible when $E_{max}$ exceeds a certain threshold for the given system configuration.
	
	\begin{figure}[!htb]
		\centering
		\includegraphics[scale=0.6]{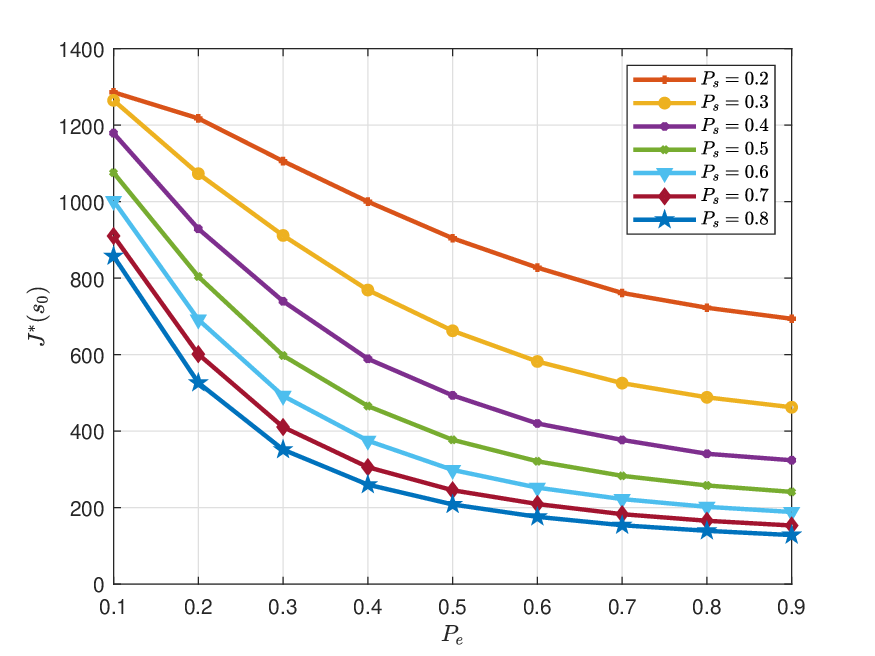}
		\caption{Impact of transmission success probability $P_s$ on $J^*(s_0)$.}
		\label{fig:ehProbVsCostVsTxSuccessProb}
	\end{figure}
	Fig.~\ref{fig:ehProbVsCostVsTxSuccessProb} illustrates $J^*(s_0)$ in relation to $P_e$ for various transmission success probabilities $P_s$. In this series of experiments, $P_z$ corresponds to the matrix defined in (\ref{eq:transitionProbabilityMatrix}), and $E_{max}$ was fixed at $5$. The figure indicates that an increase in $P_s$ consistently leads to a reduction in $J^*(s_0)$. Additionally, the results suggest that, given the energy buffer's capacity, targeting a higher $P_s$ value in environments with a low probability of energy harvesting is advisable.

	Fig.~\ref{fig:transitionProbabilitiesVsCost} depicts $J^*(s_0)$ for various combinations of state transition probabilities governing the stochastic process. In the figure, the probability $P_{01}$ (or $P_{10}$) represents the probability of the stochastic process transitioning from the normal (alarm) state to the alarm (normal) state by the end of a time slot. The probabilities $P_{00}$ and $P_{11}$ are calculated as $1 - P_{01}$ and $1 - P_{10}$ respectively.
	\begin{figure}[!htb]
		\centering
		\includegraphics[scale=0.6]{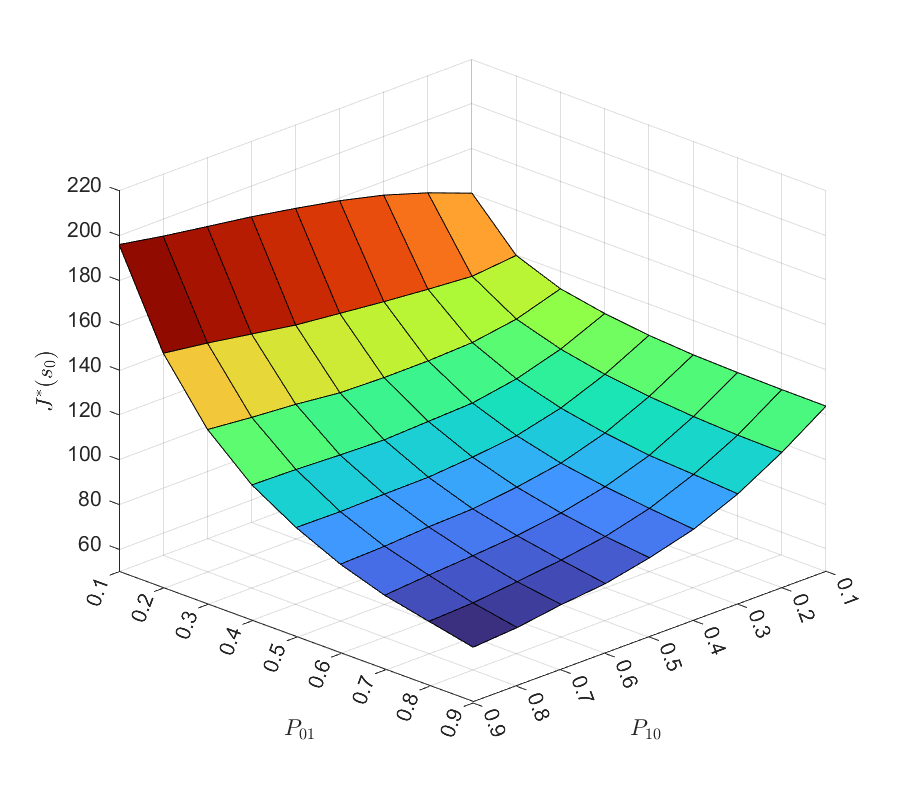}
		\caption{Impact of different stochastic process's state transition probabilities on $J^*(s_0)$.}
		\vspace{-20pt}
		\label{fig:transitionProbabilitiesVsCost}
	\end{figure}
	The highest value of $J^*(s_0)$ is observed when $(P_{01}, P_{10})=(0.9, 0.1)$, indicating a situation where the stochastic process is highly likely to transition from the normal to the alarm state and, once in the alarm state, has a low probability of returning to the normal state. The mentioned cost decreases as the probability of returning to the normal state, $P_{10}$, increases.
	One might anticipate a similar cost reduction when decreasing the values of $P_{01}$; however, the results in Fig.~\ref{fig:transitionProbabilitiesVsCost} demonstrate that decreasing $P_{01}$ could lead to an increase in $J^*(s_0)$. To be specific, the minimum value of $J^*(s_0)$ is observed when $(P_{01}, P_{10})=(0.9, 0.9)$, and $J^*(s_0)$ actually rises as $P_{01}$ decreases from $0.9$ to $0.1$. Initially, this may seem counterintuitive, as one might expect that when $P_{01}$ is small, the system will spend less time in the alarm state, resulting in a smaller cost $J^*(s_0)$. However, the underlying logic behind this phenomenon is that when $P_{01}$ and $P_{10}$ are large, the stochastic process spends only a limited number of time slots in each state.
	If the transmission success probability $P_s$ is high, neither $\Delta_k^0$ nor $\Delta_k^1$ will reach significant values, resulting in low transition costs as defined by (\ref{eq:transitionCost}).
	
	\begin{figure}[htb!]
		\centering
		\begin{subfigure}[]{0.49\columnwidth}
			\centering
			\includegraphics[width=\linewidth,trim={0.8cm 0cm 0.8cm 0.0cm}]{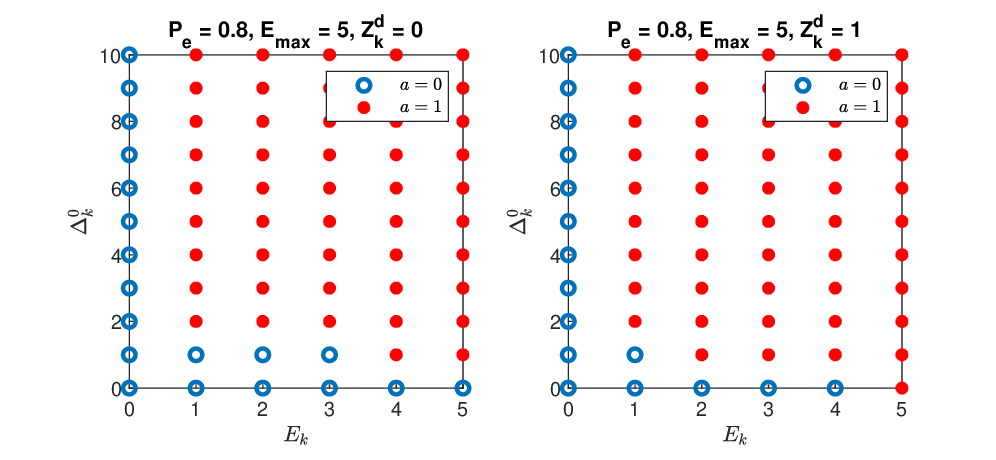}
			\vspace{-30pt}
			\caption{$Z_k = 0, P_e = 0.8, \Delta_k^1 = 0$}
			\vspace{6pt}
			\label{fig:Pe08z0}	
		\end{subfigure} 
		\begin{subfigure}[]{0.49\columnwidth}
			\centering
			\includegraphics[width=\linewidth,trim={0.8cm 0cm 0.8cm 0.0cm}]{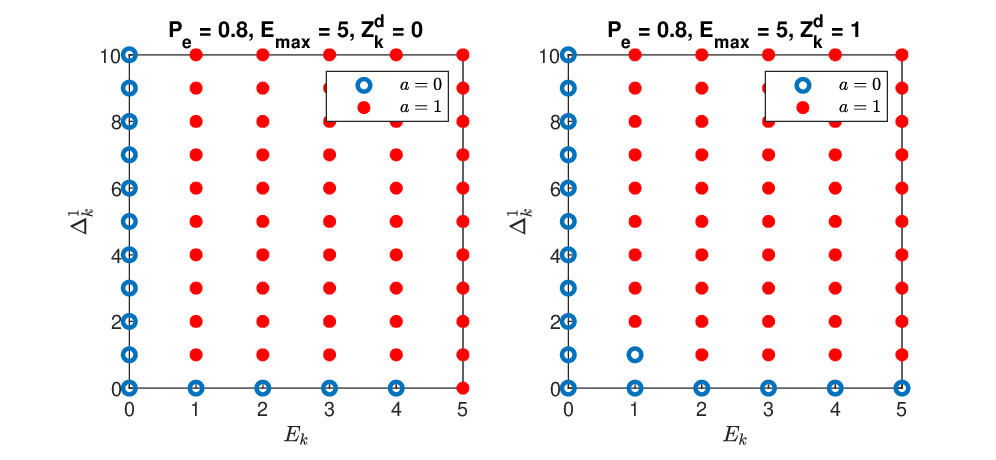}
			\vspace{-30pt}
			\caption{$Z_k = 1, P_e = 0.8, \Delta_k^0 = 0$}
			\vspace{6pt}
			\label{fig:Pe08z1}
		\end{subfigure} 
		\begin{subfigure}[]{0.49\columnwidth}
			\centering
			\includegraphics[width=\linewidth,trim={0.8cm 0cm 0.8cm 0.0cm}]{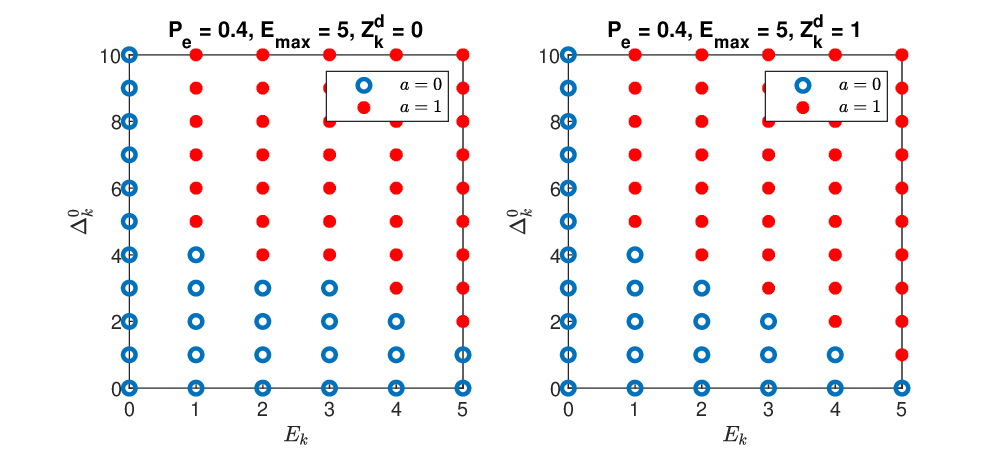}
			\vspace{-30pt}
			\caption{$Z_k = 0, P_e = 0.4, \Delta_k^1 = 0$}
			\vspace{6pt}
			\label{fig:Pe04z0}
		\end{subfigure}
		\begin{subfigure}[]{0.49\columnwidth}
			\centering
			\includegraphics[width=\linewidth,trim={0.8cm 0cm 0.8cm 0.0cm}]{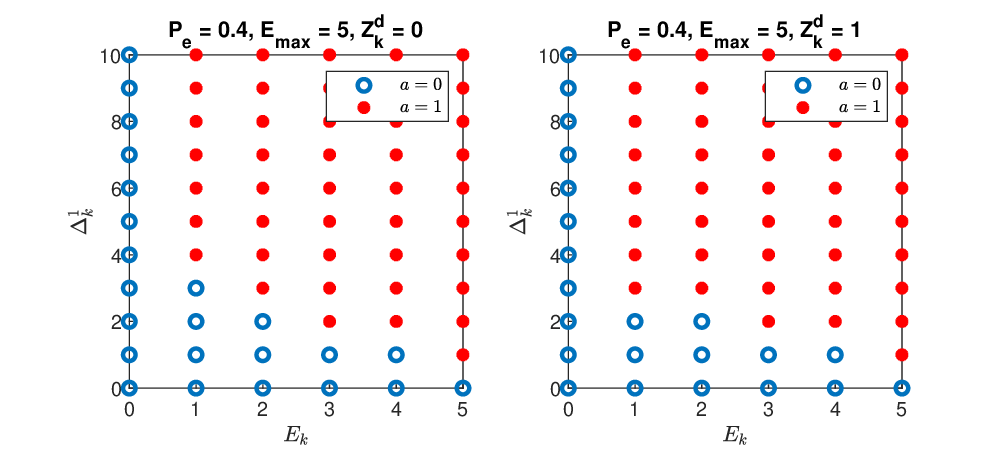}
			\vspace{-30pt}
			\caption{$Z_k = 1, P_e = 0.4, \Delta_k^0 = 0$}
			\vspace{6pt}
			\label{fig:Pe04z1}
		\end{subfigure}
		\caption{The optimal actions when the stochastic process is in the normal state ($Z_k=0$) or the alarm state $(Z_k=1)$ and $P_e$ is either $0.8$ or $0.4$.}
	\vspace{-20pt}
	\label{fig:policyActions}
\end{figure}
In Fig.~\ref{fig:policyActions}, we illustrate the optimal policy $\mu^*$ for two scenarios and two states of the stochastic process. In the first scenario, energy is harvested with a high probability at each time slot ($P_e = 0.8$). In contrast, in the second scenario, $P_e$ is set to a lower value of $0.4$. In both experiments, the transmission success probability $P_s$ was fixed at $0.8$, the energy buffer capacity was set to $5$, and the stochastic process' state transition probabilities $P_z$ were defined as in (\ref{eq:transitionProbabilityMatrix}). Specifically, Fig.~\ref{fig:Pe08z0} presents the optimal transmitter actions based on the number of energy units $E_k$ stored in the energy buffer and the value of the AoI counter $\Delta_k^0$ when $Z_k = 0$ and $P_e=0.8$. Figure~\ref{fig:Pe08z1} shows the corresponding results for the case where the stochastic process is in the alarm state ($Z_k = 1$). Figures~\ref{fig:Pe04z0} and~\ref{fig:Pe04z1} present the corresponding results for the second scenario with $P_e = 0.4$.

Comparing Figures~\ref{fig:Pe08z0} and~\ref{fig:Pe08z1}, it is evident that when the probability of harvesting energy is high, the actions prescribed by the optimal policy exhibit minimal differences between the two states of the stochastic process. Specifically, the optimal policy $\mu^*$ still tends to reserve energy when $Z_k=0$ by refraining from transmitting a status update ($a^*=0$) when $(E_k, \Delta_k^0) \in \{(2,1), (3,1)\}$, representing the only distinction between the two cases. \textit{The emphasis on energy reservation, anticipating alarm periods, becomes more pronounced when the probability of harvesting energy is lower}.
Comparing Figures~\ref{fig:Pe04z0} and~\ref{fig:Pe04z1}, we observe that the optimal policy restrains the transmitter from sending status updates when $Z_k =0$, even with a substantial number of energy units stored in the energy buffer. This strategy aims to avoid the quadratic cost associated with $Z_k^1$. The transition probability values of the stochastic process further support the justification for this optimal policy.
Matrix $P_z$ indicates that once the stochastic process enters an alarm state, it will likely remain in that state ($P_{11} = 0.8$). \textit{Therefore, reserving energy becomes essential to accommodate potentially extended periods during which the stochastic process remains in the alarm state}. 

\begin{figure}[htb!]
	\centering
	\begin{subfigure}[]{0.49\columnwidth}
		\centering
		\includegraphics[width=\linewidth,trim={0.8cm 0cm 0.8cm 0.0cm}]{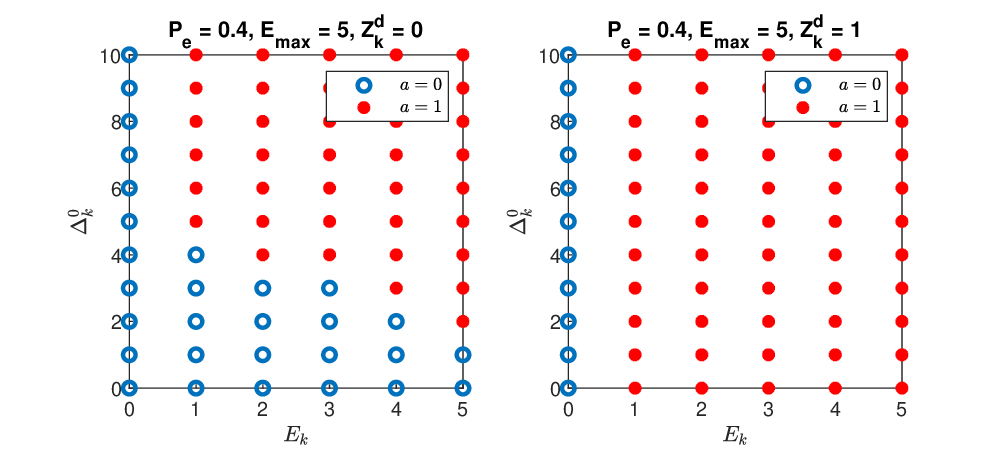}
		\vspace{-30pt}
		\caption{$Z_k = 0, P_e = 0.4, \Delta_k^1 = 10$}
		\vspace{6pt}
		\label{fig:Pe04z0_D10}
	\end{subfigure}
	\begin{subfigure}[]{0.49\columnwidth}
		\centering
		\includegraphics[width=\linewidth,trim={0.8cm 0cm 0.8cm 0.0cm}]{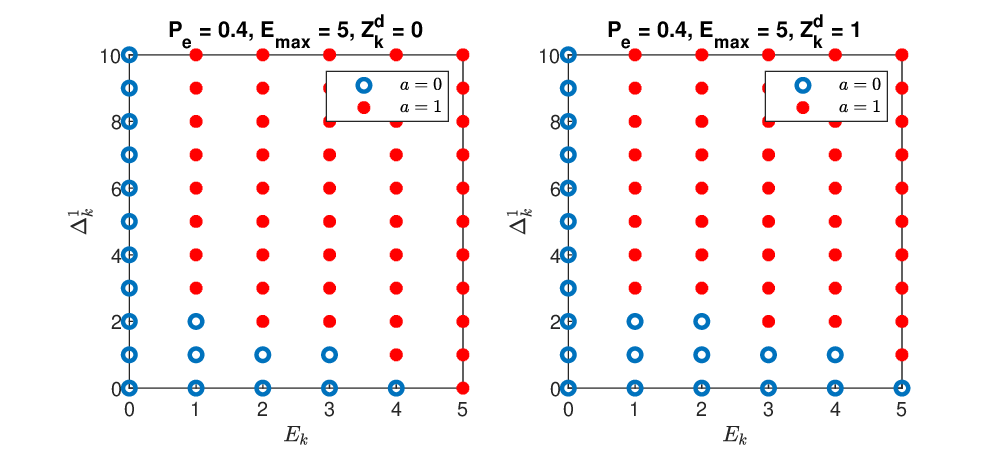}
		\vspace{-30pt}
		\caption{$Z_k = 1, P_e = 0.4, \Delta_k^0 = 10$}
		\vspace{6pt}
		\label{fig:Pe04z1_D10}
	\end{subfigure}
	\caption{The optimal actions when the stochastic process is in the normal state ($Z_k=0$) or the alarm state $(Z_k=1)$, $P_e=0.4$, and $\Delta_k^z=10$ for $z \neq Z_k$.}
	\vspace{-20pt}
	\label{fig:policyActions_D10}
\end{figure}

In Fig. \ref{fig:policyActions_D10}, we have also examined the situation in which the AoI variable related to a state other than the current state of the stochastic process, i.e., $\Delta_k^z$ for $z \neq Z_k$, has a high value. Comparing Figs. \ref{fig:Pe04z0_D10} and \ref{fig:Pe04z1_D10} with Figs. \ref{fig:Pe04z0} and \ref{fig:Pe04z1}, we find that when $Z_k$ and $Z_k^d$ are identical, this AoI variable becomes irrelevant. However, in cases where $Z_k \neq Z_k^d$, it influences optimal actions and reduces the AoI thresholds at various energy levels. This is because both AoI variables increase concurrently, making it reasonable to transmit fresh updates when one of them has a high level, particularly when $\Delta_k^1$ is elevated (as shown in the right figure in Fig. \ref{fig:Pe04z0_D10}).

\section{Conclusions and Future work}
This study examined a status update system incorporating an energy harvesting sensor to monitor a stochastic process. This process can exist in either a normal or an alarm operational state, each demanding different levels of timeliness. To address this challenge, we introduced a state-aware freshness metric characterized by a linear increase of age during the normal state and a quadratic increase during the alarm state. We then approached the optimization of this metric by formulating it as an MDP problem. The analytical demonstration revealed that the optimal policy structure is threshold-based. We then developed optimal policies for transmitting status updates across various system configurations. Through numerical assessments, we evaluated the influence of the energy buffer's capacity, transmission success probability, and the stochastic process' transition probabilities on the system's overall performance. 
Our next step includes using Deep Reinforcement Learning to tackle situations where the system's model is unknown, and $\Delta_{\text{max}}^z$ values are significantly large, resulting in a state space too large for tabular representation. Furthermore, this work can be extended to goal-oriented semantics-aware communication models, where accounting for receiver-side data utilization, actuation costs, or timeliness of actuation becomes essential.

\ifCLASSOPTIONcaptionsoff
\newpage
\fi

\bibliographystyle{IEEEtran}
\bibliography{bibliography}

\appendices

\section{Proof of Theorem 1}
\label{ProofTheorem1}
\begin{proof}
Since $g(s,a)=g(s)=(1-Z)\Delta^0+Z(\Delta^1)^2$ is not a function of $a$, the Bellman's equation can be simplified as follows:

\begin{align}
	V(s) &= g(s) + \argmin_{a\in\{0,1\}}\left\{\sum_{\tilde{s}\in S}{\gamma P(\tilde{s}|s,a)V(\tilde{s})}\right\}, \\
	a^\ast(s) &= \argmin_{a\in\{0,1\}}\left\{\sum_{\tilde{s}\in S}{P(\tilde{s}|s,a)V(\tilde{s})}\right\}.
\end{align}

We have dropped the asterisk superscript above $V$ for the sake of simplicity. Let us define $V^1(s)=\sum_{\tilde{s}\in S}{P(\tilde{s}|s,a=1)V(\tilde{s})}$, $V^0(s)=\sum_{\tilde{s}\in S}{P(\tilde{s}|s,a=0)V(\tilde{s})}$, and $\Delta V(s)=V^1(s)-V^0(s)$. Thus, we have:
\begin{align}
	a^\ast(s) = 
	\begin{cases}
		0 & \Delta V(s) \geq 0, \\
		1 & \Delta V(s) < 0.
	\end{cases}
\end{align}

In what follows, we show that $\Delta V(s)$ is a decreasing function of $(\Delta^0,\Delta^1)$ for each combination of $(Z,Z^d,E)$.  Thus, $\Delta V(s)$
can become negative for sufficiently large values of $(\Delta^0,\Delta^1)$, leading to the action $a = 1$ for $(\Delta^0,\Delta^1) \geq (\Delta_T^0,\Delta_T^1)$.
\begin{align}
	\label{DV_MainExp}
	\Delta V(s)=V^1(s)-V^0(s)=\sum_{\tilde{s}\in S}{\left[P(\tilde{s}|s,a=1)-P(\tilde{s}|s,a=0)\right]V(\tilde{s})}.
\end{align}

When $E=0$, then $\Delta V(s)=0$ for each combination of $(Z,Z^d,\Delta^0,\Delta^1)$, so the action $a=0$ is optimal. We therefore consider the cases where $E>0$. 
The second term in \eqref{DV_MainExp} can be determined using the equations \eqref{TransProbMain}, \eqref{ConditionalTrans}, and \eqref{ConditionalRvs}.
\begin{align}
	P(\tilde{s}|s,a=0)=\sum_{[W^s,W^e,W^z]}{P\left[\tilde{s}|s,a=0,W^s,W^e,W^z\right]P\left[W^s|E,a=0\right]P\left[W^e\right]P\left[W^z|Z\right]}.
\end{align}
If the sensor decides against a transmission, then $W^s=0$ with probability one, so we have:
\begin{align}
	\label{TransProbConda0}
	P(\tilde{s}|s,a=0)=\sum_{[W^e,W^z]}{P\left[\tilde{s}|s,a=0,W^s=0,W^e,W^z\right]P\left[W^e\right]P\left[W^z|Z\right]}.
\end{align}
According to \eqref{EvolutionProbZ} - \eqref{EvolutionProbDz}, $P\left[\tilde{s}|s,a=0,W^s=0,W^e,W^z\right]=1$ for those $\tilde{s}\in S$ that hold all of the following conditions; it is $0$ otherwise.

\begin{subequations}
	\begin{gather}
		\tilde{Z}=W^z, \\ 
		\tilde{Z}^d=Z^d, \\ 
		\tilde{E}=E+W^e, \label{NextEa0}\\ 
		\left(\tilde{\Delta}^0=0,Z \neq 0,Z^d \neq 0 \right) \text{or } \left(\tilde{\Delta}^0=\Delta^0+1,\ (Z=Z^d=0 \text{ or } Z \neq Z^d)\right)\!, \\
		\left(\tilde{\Delta}^1=0,Z \neq 1,Z^d \neq 1 \right) \text{or } \left(\tilde{\Delta}^1=\Delta^1+1,\ (Z=Z^d=1 \text{ or } Z \neq Z^d)\right)\!.
	\end{gather}
\end{subequations}
We omitted the $\min\{\cdot,\Delta_{max}\}$ term due to space constraints, as it does not affect the proof of the theorem. The first term in \eqref{DV_MainExp} can also be simplified using the equations \eqref{TransProbMain}, \eqref{ConditionalTrans}, and \eqref{ConditionalRvs}.
\begin{subequations}
	\label{a0NonZeroConds}
	\begin{align}
		\notag P&(\tilde{s}|s,a=1)=\sum_{[W^s,W^e,W^z]}{P\left[\tilde{s}|s,a=1,W^s,W^e,W^z\right]P\left[W^s|E,a=1\right]P\left[W^e\right]P\left[W^z|Z\right]} \\ \notag
		&=\sum_{[W^e,W^z]}{P\left[\tilde{s}|s,a=1,W^s=0,W^e,W^z\right]P\left[W^s=0|E,a=1\right]P\left[W^e\right]P\left[W^z|Z\right]} \\ \notag
		&+\sum_{[W^e,W^z]}{P\left[\tilde{s}|s,a=1,W^s=1,W^e,W^z\right]P\left[W^s=1|E,a=1\right]P\left[W^e\right]P\left[W^z|Z\right]} \\ 
		&=(1-P_s)\sum_{[W^e,W^z]}{P\left[\tilde{s}|s,a=1,W^s=0,W^e,W^z\right]P\left[W^e\right]P\left[W^z|Z\right]} \label{TransProbConda1Ws0} \\
		&+P_s\sum_{[W^e,W^z]}{P\left[\tilde{s}|s,a=1,W^s=1,W^e,W^z\right]P\left[W^e\right]P\left[W^z|Z\right]}. \label{TransProbConda1Ws1}
	\end{align}
\end{subequations}
The summation \eqref{TransProbConda1Ws0} is the same as \eqref{TransProbConda0}, and is equal to $1$ if the conditions \eqref{a0NonZeroConds} are satisfied; except that condition \eqref{NextEa0} is replaced by $\tilde{E}=E+W^e-1$. In addition, according to \eqref{EvolutionProbZ} - \eqref{EvolutionProbDz}, $P\left[\tilde{s}|s,a=1,W^s=1,W^e,W^z\right]$ in \eqref{TransProbConda1Ws1} will be equal to $1$ for those $\tilde{s}\in S$ that hold all of the following conditions; it will be $0$ otherwise.
\begin{subequations}
	\begin{gather}
		\tilde{Z}=W^z, \\ 
		\tilde{Z}^d=Z, \\ 
		\tilde{E}=E+W^e-1,\\ 
		\left(\tilde{\Delta}^0=0,Z \neq 0\right) \text{ or } \left(\tilde{\Delta}^0=1,Z=0\right)\!, \\
		\left(\tilde{\Delta}^1=0,Z \neq 1\right) \text{ or } \left(\tilde{\Delta}^1=1,Z=1\right)\!.
	\end{gather}
\end{subequations}
Now, we can write $\Delta V(s)$ using the equation \eqref{DV_MainExp} for different values of $Z$ and $Z^d$. 

\textit{\textbf{Case 1.}} $Z=0$ and $Z^d=0$.
In this case, we have:
\begin{align}
	P\left[\tilde{s}|s,a=0,W^s=0,W^e,W^z\right]=
	\begin{cases}
		1 & \parbox[c]{7cm}{$\tilde{Z}=W^z, \tilde{Z}^d=Z^d, \tilde{E}=E+W^e,\\ \tilde{\Delta}^0=\Delta^0+1,\tilde{\Delta}^1=0,$}\\
		0 & \text{otherwise,}
	\end{cases}
\end{align}
\begin{align}
	P\left[\tilde{s}|s,a=1,W^s=0,W^e,W^z\right]=
	\begin{cases}
		1 & \parbox[c]{7cm}{$\tilde{Z}=W^z, \tilde{Z}^d=Z^d, \tilde{E}=E+W^e-1,\\ \tilde{\Delta}^0=\Delta^0+1,\tilde{\Delta}^1=0,$}\\
		0 & \text{otherwise,}
	\end{cases}
\end{align}
\begin{align}
	P\left[\tilde{s}|s,a=1,W^s=1,W^e,W^z\right]=
	\begin{cases}
		1 & \parbox[c]{7cm}{$\tilde{Z}=W^z, \tilde{Z}^d=Z, \tilde{E}=E+W^e-1,\\ \tilde{\Delta}^0=1,\tilde{\Delta}^1=0,$}\\
		0 & \text{otherwise,}
	\end{cases}
\end{align}
then we have $\Delta V(s)=V^1(s)-V^0(s)$, where:
\begin{align}
	V^0(s)&=\sum_{\tilde{s}\in S}{P(\tilde{s}|s,a=0)V(\tilde{s})}  =\sum_{\tilde{s}\in S}{\sum_{[W^e,W^z]}{P\left[\tilde{s}|s,a=0,W^s=0,W^e,W^z\right]P\left[W^e\right]P\left[W^z|Z\right]}V(\tilde{s})} \notag \\ &=\sum_{[W^e,W^z]}{V\left(W^z,Z^d,E+W^e,\Delta^0+1,0\right)P\left[W^e\right]P\left[W^z|Z\right]},
\end{align}
\begin{align}
	V^1(s)&=\sum_{\tilde{s}\in S}{P(\tilde{s}|s,a=1)V(\tilde{s})} \notag \\ &=(1-P_s)\sum_{\tilde{s}\in S}{\sum_{[W^e,W^z]}{P\left[\tilde{s}|s,a=1,W^s=0,W^e,W^z\right]P\left[W^e\right]P\left[W^z|Z\right]}V(\tilde{s})} \notag \\
	&+P_s\sum_{\tilde{s}\in S}{\sum_{[W^e,W^z]}{P\left[\tilde{s}|s,a=1,W^s=1,W^e,W^z\right]P\left[W^e\right]P\left[W^z|Z\right]}V(\tilde{s})} \notag \\ &=(1-P_s)\sum_{[W^e,W^z]}{V\left(W^z,Z^d,E+W^e-1,\Delta^0+1,0\right)P\left[W^e\right]P\left[W^z|Z\right]} \notag \\
	&+P_s\sum_{[W^e,W^z]}{V\left(W^z,Z,E+W^e-1,1,0\right)P\left[W^e\right]P\left[W^z|Z\right]}.
\end{align}

\textit{\textbf{Case 2.}} $Z=0$ and $Z^d=1$.
In this case, we have:
\begin{align}
	P\left[\tilde{s}|s,a=0,W^s=0,W^e,W^z\right]=
	\begin{cases}
		1 & \parbox[c]{7cm}{$\tilde{Z}=W^z, \tilde{Z}^d=Z^d, \tilde{E}=E+W^e,\\ \tilde{\Delta}^0=\Delta^0+1,\tilde{\Delta}^1=\Delta^1+1,$}\\
		0 & \text{otherwise,}
	\end{cases}
\end{align}
\begin{align}
	P\left[\tilde{s}|s,a=1,W^s=0,W^e,W^z\right]=
	\begin{cases}
		1 & \parbox[c]{7cm}{$\tilde{Z}=W^z, \tilde{Z}^d=Z^d, \tilde{E}=E+W^e-1,\\ \tilde{\Delta}^0=\Delta^0+1,\tilde{\Delta}^1=\Delta^1+1,$}\\
		0 & \text{otherwise,}
	\end{cases}
\end{align}
\begin{align}
	P\left[\tilde{s}|s,a=1,W^s=1,W^e,W^z\right]=
	\begin{cases}
		1 & \parbox[c]{7cm}{$\tilde{Z}=W^z, \tilde{Z}^d=Z, \tilde{E}=E+W^e-1,\\ \tilde{\Delta}^0=1,\tilde{\Delta}^1=0,$}\\
		0 & \text{otherwise,}
	\end{cases}
\end{align}
then we have $\Delta V(s)=V^1(s)-V^0(s)$, where:
\begin{align}
	V^0(s)&=\sum_{\tilde{s}\in S}{P(\tilde{s}|s,a=0)V(\tilde{s})} =\sum_{[W^e,W^z]}{V\left(W^z,Z^d,E+W^e,\Delta^0+1,\Delta^1+1\right)P\left[W^e\right]P\left[W^z|Z\right]},
\end{align}
\begin{align}
	V^1(s)&=\sum_{\tilde{s}\in S}{P(\tilde{s}|s,a=1)V(\tilde{s})} \notag \\ &=(1-P_s)\sum_{[W^e,W^z]}{V\left(W^z,Z^d,E+W^e-1,\Delta^0+1,\Delta^1+1\right)P\left[W^e\right]P\left[W^z|Z\right]} \notag \\
	&+P_s\sum_{[W^e,W^z]}{V\left(W^z,Z,E+W^e-1,1,0\right)P\left[W^e\right]P\left[W^z|Z\right]}.
\end{align}

\textit{\textbf{Case 3.}} $Z=1$ and $Z^d=0$.
In this case, we have:
\begin{align}
	P\left[\tilde{s}|s,a=0,W^s=0,W^e,W^z\right]=
	\begin{cases}
		1 & \parbox[c]{7cm}{$\tilde{Z}=W^z, \tilde{Z}^d=Z^d, \tilde{E}=E+W^e,\\ \tilde{\Delta}^0=\Delta^0+1,\tilde{\Delta}^1=\Delta^1+1,$}\\
		0 & \text{otherwise,}
	\end{cases}
\end{align}
\begin{align}
	P\left[\tilde{s}|s,a=1,W^s=0,W^e,W^z\right]=
	\begin{cases}
		1 & \parbox[c]{7cm}{$\tilde{Z}=W^z, \tilde{Z}^d=Z^d, \tilde{E}=E+W^e-1,\\ \tilde{\Delta}^0=\Delta^0+1,\tilde{\Delta}^1=\Delta^1+1,$}\\
		0 & \text{otherwise,}
	\end{cases}
\end{align}
\begin{align}
	P\left[\tilde{s}|s,a=1,W^s=1,W^e,W^z\right]=
	\begin{cases}
		1 & \parbox[c]{7cm}{$\tilde{Z}=W^z, \tilde{Z}^d=Z, \tilde{E}=E+W^e-1,\\ \tilde{\Delta}^0=0,\tilde{\Delta}^1=1,$}\\
		0 & \text{otherwise,}
	\end{cases}
\end{align}
then we have $\Delta V(s)=V^1(s)-V^0(s)$, where:
\begin{align}
	V^0(s)&=\sum_{\tilde{s}\in S}{P(\tilde{s}|s,a=0)V(\tilde{s})} \notag \\ &=\sum_{[W^e,W^z]}{V\left(W^z,Z^d,E+W^e,\Delta^0+1,\Delta^1+1\right)P\left[W^e\right]P\left[W^z|Z\right]},
\end{align}
\begin{align}
	V^1(s)&=\sum_{\tilde{s}\in S}{P(\tilde{s}|s,a=1)V(\tilde{s})} \notag \\ &=(1-P_s)\sum_{[W^e,W^z]}{V\left(W^z,Z^d,E+W^e-1,\Delta^0+1,\Delta^1+1\right)P\left[W^e\right]P\left[W^z|Z\right]} \notag \\
	&+P_s\sum_{[W^e,W^z]}{V\left(W^z,Z,E+W^e-1,0,1\right)P\left[W^e\right]P\left[W^z|Z\right]}.
\end{align}

\textit{\textbf{Case 4.}} $Z=1$ and $Z^d=1$.
In this case, we have:
\begin{align}
	P\left[\tilde{s}|s,a=0,W^s=0,W^e,W^z\right]=
	\begin{cases}
		1 & \parbox[c]{7cm}{$\tilde{Z}=W^z, \tilde{Z}^d=Z^d, \tilde{E}=E+W^e,\\ \tilde{\Delta}^0=0,\tilde{\Delta}^1=\Delta^1+1,$}\\
		0 & \text{otherwise,}
	\end{cases}
\end{align}
\begin{align}
	P\left[\tilde{s}|s,a=1,W^s=0,W^e,W^z\right]=
	\begin{cases}
		1 & \parbox[c]{7cm}{$\tilde{Z}=W^z, \tilde{Z}^d=Z^d, \tilde{E}=E+W^e-1,\\ \tilde{\Delta}^0=0,\tilde{\Delta}^1=\Delta^1+1,$}\\
		0 & \text{otherwise,}
	\end{cases}
\end{align}
\begin{align}
	P\left[\tilde{s}|s,a=1,W^s=1,W^e,W^z\right]=
	\begin{cases}
		1 & \parbox[c]{7cm}{$\tilde{Z}=W^z, \tilde{Z}^d=Z, \tilde{E}=E+W^e-1,\\ \tilde{\Delta}^0=0,\tilde{\Delta}^1=1,$}\\
		0 & \text{otherwise,}
	\end{cases}
\end{align}
then we have $\Delta V(s)=V^1(s)-V^0(s)$, where:
\begin{align}
	V^0(s)&=\sum_{\tilde{s}\in S}{P(\tilde{s}|s,a=0)V(\tilde{s})}  =\sum_{[W^e,W^z]}{V\left(W^z,Z^d,E+W^e,0,\Delta^1+1\right)P\left[W^e\right]P\left[W^z|Z\right]},
\end{align}
\begin{align}
	V^1(s)&=\sum_{\tilde{s}\in S}{P(\tilde{s}|s,a=1)V(\tilde{s})} =(1-P_s)\sum_{[W^e,W^z]}{V\left(W^z,Z^d,E+W^e-1,0,\Delta^1+1\right)P\left[W^e\right]P\left[W^z|Z\right]} \notag \\
	&+P_s\sum_{[W^e,W^z]}{V\left(W^z,Z,E+W^e-1,0,1\right)P\left[W^e\right]P\left[W^z|Z\right]}.
\end{align}
We will proceed with the theorem's proof for case 3, as the proof for other cases follows a similar approach. We aim to show that $\Delta V(s)$ is decreasing in $(\Delta^0,\Delta^1)$ for each $(Z,Z^d,E)$. 
\begin{align}
	\label{DeltaVcase3}
	\Delta V(s)&=V^1(s)-V^0(s)\\
	&=\sum_{[W^e,W^z]}\bigg\{(1-P_s)V\left(W^z,Z^d,E+W^e-1,\Delta^0+1,\Delta^1+1\right) \notag \\
	&\phantom{=\sum_{[W^e,W^z]}\bigg\{}-V\left(W^z,Z^d,E+W^e,\Delta^0+1,\Delta^1+1\right) \notag \\
	&\phantom{=\sum_{[W^e,W^z]}\bigg\{} +P_s V\left(W^z,Z,E+W^e-1,0,1\right)\bigg\}P\left[W^e\right]P\left[W^z|Z\right].
\end{align}
Let us define $s^+=\left[Z,Z^d,E,\Delta^{0+},\Delta^{1+}\right]^T$ and $s^-=\left[Z,Z^d,E,\Delta^{0-},\Delta^{1-}\right]^T$ such that $\left(\Delta^{0+},\Delta^{1+}\right) \geq \left(\Delta^{0-},\Delta^{1-}\right)$, element-wise. We will therefore prove that $\Delta V(s^+) \leq \Delta V(s^-)$.

{\small 
	\begin{align}
		\Delta V(s^+) \leq \Delta V(s^-) &\Leftrightarrow \sum_{[W^e,W^z]}\bigg\{(1-P_s)V\left(W^z,Z^d,E+W^e-1,\Delta^{0+}+1,\Delta^{1+}+1\right) \notag \\
		&\phantom{=\sum_{[W^e,W^z]}\bigg\{}-V\left(W^z,Z^d,E+W^e,\Delta^{0+}+1,\Delta^{1+}+1\right) \notag \\
		&\phantom{=\sum_{[W^e,W^z]}\bigg\{} +P_s V\left(W^z,Z,E+W^e-1,0,1\right)\bigg\}P\left[W^e\right]P\left[W^z|Z\right] \notag \\
		&\leq \sum_{[W^e,W^z]}\bigg\{(1-P_s)V\left(W^z,Z^d,E+W^e-1,\Delta^{0-}+1,\Delta^{1-}+1\right) \notag \\
		&\phantom{=\sum_{[W^e,W^z]}\bigg\{}-V\left(W^z,Z^d,E+W^e,\Delta^{0-}+1,\Delta^{1-}+1\right) \notag \\
		&\phantom{=\sum_{[W^e,W^z]}\bigg\{} +P_s V\left(W^z,Z,E+W^e-1,0,1\right)\bigg\}P\left[W^e\right]P\left[W^z|Z\right] \notag 
	\end{align}
	\begin{align}
		&\Leftrightarrow (1-P_s)V\left(W^z,Z^d,E+W^e-1,\Delta^{0+}+1,\Delta^{1+}+1\right)-V\left(W^z,Z^d,E+W^e,\Delta^{0+}+1,\Delta^{1+}+1\right) \notag \\
		&\phantom{\Leftrightarrow} \leq (1-P_s)V\left(W^z,Z^d,E+W^e-1,\Delta^{0-}+1,\Delta^{1-}+1\right)-V\left(W^z,Z^d,E+W^e,\Delta^{0-}+1,\Delta^{1-}+1\right) \notag \\
		&\Leftrightarrow (1-P_s)\left[V\left(W^z,Z^d,E+W^e-1,\Delta^{0+}+1,\Delta^{1+}+1\right) -V\left(W^z,Z^d,E+W^e-1,\Delta^{0-}+1,\Delta^{1-}+1\right)\right] \notag \\
		&\phantom{\Leftrightarrow}\leq V\left(W^z,Z^d,E+W^e,\Delta^{0+}+1,\Delta^{1+}+1\right) -V\left(W^z,Z^d,E+W^e,\Delta^{0-}+1,\Delta^{1-}+1\right).
	\end{align}
}

In the following Lemma, we demonstrate the last inequality which concludes the proof of Theorem 1.
\end{proof}

\section{Lemma 1}
\label{Lemma1}
\begin{lemma}
    Suppose that $s_E^+=[Z,Z^d,E,\Delta^{0+},\Delta^{0+}]^T$, $s_E^-=[Z,Z^d,E,\Delta^{0-},\Delta^{0-}]^T$, $s_{E-1}^+=[Z,Z^d,E-1,\Delta^{0+},\Delta^{0+}]^T$, and $s_{E-1}^-=[Z,Z^d,E-1,\Delta^{0-},\Delta^{0-}]^T$ are four states such that $E\in\{1,2,3,\cdots\}$ and $\left(\Delta^{0+},\Delta^{1+}\right) \geq \left(\Delta^{0-},\Delta^{1-}\right)$; then the value function satisfies the following inequality:
    \begin{align}
    	(1-P_s)\left[V\left(s_{E-1}^+\right) -V\left(s_{E-1}^-\right)\right] \leq V\left(s_{E}^+\right) -V\left(s_{E}^-\right).
    \end{align}
\end{lemma}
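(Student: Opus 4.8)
The plan is to prove Lemma~1 by induction along the Value Iteration recursion $V^\ast=\lim_{m\to\infty}(T^mV)$ established in Section~\ref{sec:optimalPolicy}, showing the inequality is preserved by the Bellman operator $T$ and then passing to the limit. Since $g(s,a)=g(s)$ does not depend on $a$, I would write $(TV)(s)=g(s)+\gamma M(s)$ with $M(s):=\min_{a\in A(s)}V^a(s)$ and $V^a(s):=\sum_{\tilde s}P(\tilde s|s,a)V(\tilde s)$, whose explicit forms for each of the four $(Z,Z^d)$ combinations are exactly those computed in Appendix~\ref{ProofTheorem1}. Let $\mathcal C$ be the set of bounded $V:S\to\mathbb R$ that (a) are non-decreasing in $\Delta^0$ and in $\Delta^1$ (coordinatewise, other components fixed) and (b) satisfy the inequality of Lemma~1 for every admissible $(Z,Z^d,E)$ and every ordered pair $\delta^+\geq\delta^-$. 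Because $\mathcal C$ is closed under uniform limits and contains $V\equiv 0$, it suffices to show $T(\mathcal C)\subseteq\mathcal C$.

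First I would check that $T$ preserves (a). The stage cost $g(s)=(1-Z)\Delta^0+Z(\Delta^1)^2$ is non-decreasing in $(\Delta^0,\Delta^1)$, and in every case the AoI map $\delta\mapsto$ (next-slot AoI) appearing in $V^0$ and in the failed-transmission branch of $V^1$ is coordinatewise non-decreasing (for instance $\delta\mapsto(\Delta^0{+}1,\Delta^1{+}1)$ in Cases~2 and~3), while the successful-transmission branch of $V^1$ resets the AoI to a value that does not depend on $\delta$; hence $V^a(s^+)\geq V^a(s^-)$ for each admissible $a$ when $\delta^+\geq\delta^-$, and the minimum of non-decreasing functions is non-decreasing.

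Next I would establish that $T$ preserves (b). Since $g(s_E^\pm)=g(s_{E-1}^\pm)$ and $\Delta g:=g(s^+)-g(s^-)\geq 0$ by (a), the contribution of $g$ to the target inequality is $(1-P_s)\Delta g\leq\Delta g$, which is automatic, so it remains to prove $(1-P_s)\big[M(s_{E-1}^+)-M(s_{E-1}^-)\big]\leq M(s_E^+)-M(s_E^-)$. I would reduce this to a \emph{pairwise} statement: for all actions $a$ admissible at $s_E$ and $b$ admissible at $s_{E-1}$,
\begin{equation*}
(1-P_s)\big[V^b(s_{E-1}^+)-V^b(s_{E-1}^-)\big]\ \leq\ V^a(s_E^+)-V^a(s_E^-).
\end{equation*}
Indeed, picking $a^+\in\argmin_{a}V^a(s_E^+)$ and $b^-\in\argmin_{a}V^a(s_{E-1}^-)$ gives $M(s_E^+)-M(s_E^-)\geq V^{a^+}(s_E^+)-V^{a^+}(s_E^-)$ and $M(s_{E-1}^+)-M(s_{E-1}^-)\leq V^{b^-}(s_{E-1}^+)-V^{b^-}(s_{E-1}^-)$, so the pairwise inequality with $(a,b)=(a^+,b^-)$ closes the chain (when $E=1$ only $b=0$ is admissible at the empty buffer, which the pairwise statement still covers). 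To prove the pairwise inequality I would observe that in $V^a(s_E^+)-V^a(s_E^-)$ the $P_s$-weighted successful-transmission term of $V^1$ cancels, being AoI-independent, so both sides reduce to non-negative-weighted sums over $(W^e,W^z)$ of differences $d_m:=V(W^z,Z^d,m,\delta_{\mathrm{next}}^+)-V(W^z,Z^d,m,\delta_{\mathrm{next}}^-)$ evaluated at next-slot energies $m$, with an explicit power of $(1-P_s)$ in front; a routine term-by-term check shows the left-hand energy never exceeds the right-hand one and that the number of $(1-P_s)$ factors on the left exceeds that on the right by at least the corresponding energy gap (which is $0,1$ or $2$ across the four action pairs, with the cap $\min(E+W^e,E_{max})=E_{max}$ absorbed using $d\geq 0$), whence each term follows from the inductive hypothesis~(b) iterated at most twice together with~(a).

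\textbf{Main obstacle.} The delicate points are the choice of the invariant class $\mathcal C$ so that the induction closes with no further structural property — here it matters that no monotonicity-in-$E$ hypothesis is needed because the energy-saturation boundary is handled by~(a) — and the interaction of the $\min$ over actions with the inequality, resolved by the pairwise reduction; the crucial structural feature making that reduction work is that the post-transmission AoI reset renders the $P_s$-term of $V^1$ independent of the current AoI, so it drops out of the differences. Everything else is the bookkeeping over $(Z,Z^d)$ (of which Case~3, as in Appendix~\ref{ProofTheorem1}, is representative) and over the boundaries $E=1$ and $E=E_{max}$, which is mechanical.
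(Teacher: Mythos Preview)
Your proposal is correct and follows essentially the same route as the paper: induction along value iteration, exploiting that the $P_s$-weighted reset term in $V^1$ is AoI-independent and therefore cancels in differences, followed by a four-way case check that reduces to one or two applications of the inductive hypothesis. Your pairwise reduction over action pairs $(a,b)$ is just a cleaner packaging of the paper's case split on the minimizers at $s_{E-1}^-$ and $s_E^+$ (the paper uses $\min\{x,y\}=x+\min\{0,y-x\}$ to discard the same non-positive terms you bound via $M(s_E^+)-M(s_E^-)\geq V^{a^+}(s_E^+)-V^{a^+}(s_E^-)$, etc.), and your explicit inclusion of coordinatewise monotonicity (a) in the invariant class $\mathcal C$ makes the treatment of the $E_{max}$ and $\Delta_{max}$ caps more careful than the paper's, which simply omits them.
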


\begin{proof} 
We employ the Value Iteration Algorithm (VIA) to prove the lemma. In each iteration at time step $k$, the value function is updated as follows:

\begin{align}
	\label{VIA}
	V_k(s) &= \min_{a\in\{0,1\}}\left\{\sum_{\tilde{s}\in S}{P(\tilde{s}|s,a)\left[g(s,a)+\gamma V_{k-1}(\tilde{s})\right]}\right\} 
	=g(s)+\min_{a\in\{0,1\}}\left\{\gamma\sum_{\tilde{s}\in S}{P(\tilde{s}|s,a)V_{k-1}(\tilde{s})}\right\}.
\end{align}

VIA converges to the value function of the Bellman's equation irrespective of the initial value assigned to $V_0(s)$, i.e., $\lim_{k\rightarrow\infty}{V_k(s)}=V(s)\ \forall s\in S$. Therefore, it suffices to establish the following:
\begin{align}
	\label{Lemma1_iter_k}
	(1-P_s)\left[V_k\left(s_{E-1}^+\right) -V_k\left(s_{E-1}^-\right)\right] \leq V_k\left(s_{E}^+\right) -V_k\left(s_{E}^-\right), \quad \forall k=0,1,2,\cdots.
\end{align}

We utilize mathematical induction to proceed the proof. Assuming $V_0(s) = 0$ for all $s \in S$, \eqref{Lemma1_iter_k} holds true for $k=0$. Now, with the same assumption extending up to $k > 0$, we prove its validity for $k+1$, i.e.:
\begin{align}
	\label{Lemma1_iter_k+1}
	&(1-P_s)\left[V_{k+1}\left(s_{E-1}^+\right) -V_{k+1}\left(s_{E-1}^-\right)\right] \leq V_{k+1}\left(s_{E}^+\right) -V_{k+1}\left(s_{E}^-\right) \notag \\
	&\Leftrightarrow (1-P_s)\left[V_{k+1}\left(s_{E-1}^+\right) -V_{k+1}\left(s_{E-1}^-\right)\right]-\left[V_{k+1}\left(s_{E}^+\right) -V_{k+1}\left(s_{E}^-\right)\right] \leq 0.
\end{align}
Let us define $V_{k+1}^0(s)$ and $V_{k+1}^1(s)$ as follows:
\begin{align}
	\label{V_k+1_a0}
	V_{k+1}^0(s)&=g(s)+\gamma\sum_{\tilde{s}\in S}{P(\tilde{s}|s,a=0)V_{k}(\tilde{s})} \notag \\
	&=g(s)+\gamma\sum_{[W^e,W^z]}{V_k\left(W^z,Z^d,E+W^e,\Delta^0+1,\Delta^1+1\right)P\left[W^e\right]P\left[W^z|Z\right]},
\end{align}
\begin{align}
	\label{V_k+1_a1}
	V_{k+1}^1(s)&=g(s)+\gamma\sum_{\tilde{s}\in S}{P(\tilde{s}|s,a=1)V_{k}(\tilde{s})} \notag \\
	&=g(s)+\gamma(1-P_s)\sum_{[W^e,W^z]}{V_k\left(W^z,Z^d,E+W^e-1,\Delta^0+1,\Delta^1+1\right)P\left[W^e\right]P\left[W^z|Z\right]} \notag \\
	&+\gamma P_s\sum_{[W^e,W^z]}{V_k\left(W^z,Z,E+W^e-1,0,1\right)P\left[W^e\right]P\left[W^z|Z\right]},
\end{align}
\noindent then we have $V_{k+1}(s)=\min\left\{V_{k+1}^0(s),V_{k+1}^1(s)\right\}$, according to VIA iteration \eqref{VIA} at time slot $k+1$. Thus, \eqref{Lemma1_iter_k+1} can be rewritten as follows:
\begin{align}
	\label{Lemma1_iter_k+1_min}
	(1-P_s)&\left[\min\left\{V_{k+1}^0(s_{E-1}^+),V_{k+1}^1(s_{E-1}^+)\right\} -\min\left\{V_{k+1}^0(s_{E-1}^-),V_{k+1}^1(s_{E-1}^-)\right\}\right] \notag \\
	& -\left[\min\left\{V_{k+1}^0(s_{E}^+),V_{k+1}^1(s_{E}^+)\right\} -\min\left\{V_{k+1}^0(s_{E}^-),V_{k+1}^1(s_{E}^-)\right\}\right] \leq 0.
\end{align}
Now, we consider four cases.

\textit{\textbf{Case 1.}} $V_{k+1}^0(s_{E-1}^-) \leq V_{k+1}^1(s_{E-1}^-)$ and $V_{k+1}^0(s_{E}^+) \leq V_{k+1}^1(s_{E}^+)$.
In this case, equation \eqref{Lemma1_iter_k+1_min} is simplified to:

\begin{align}
	(1-P_s)&\left[\min\left\{V_{k+1}^0(s_{E-1}^+),V_{k+1}^1(s_{E-1}^+)\right\} -V_{k+1}^0(s_{E-1}^-)\right] \notag \\
	& -\left[V_{k+1}^0(s_{E}^+) -\min\left\{V_{k+1}^0(s_{E}^-),V_{k+1}^1(s_{E}^-)\right\}\right] \leq 0.
\end{align}
We know that $\min\left\{x,y\right\}=x+\min\left\{0,y-x\right\}$, so we can simplify further:
\begin{align}
	(1-P_s)&\left[V_{k+1}^0(s_{E-1}^+)-V_{k+1}^0(s_{E-1}^-)\right]+\overbrace{(1-P_s)\min\left\{0,V_{k+1}^1(s_{E-1}^+)-V_{k+1}^0(s_{E-1}^+)\right\}}^{\leq 0}  \notag \\
	& -\left[V_{k+1}^0(s_{E}^+) -V_{k+1}^0(s_{E}^-)\right]+\underbrace{\min\left\{0,V_{k+1}^1(s_{E}^-)-V_{k+1}^0(s_{E}^-)\right\}}_{\leq 0} \leq 0,
\end{align}
\noindent where the second and last terms are negative (non-positive), thus it suffices to show that:
\begin{align}
	(1-P_s)&\left[V_{k+1}^0(s_{E-1}^+)-V_{k+1}^0(s_{E-1}^-)\right]-\left[V_{k+1}^0(s_{E}^+) -V_{k+1}^0(s_{E}^-)\right] \leq 0.
\end{align}
According to \eqref{V_k+1_a0}, we have:
\begin{align}
	\label{Lemma1_Case1_Ineq}
	&(1-P_s)\left[g(s_{E-1}^+)-g(s_{E-1}^-)+\gamma\sum_{\tilde{s}\in S}{\left[P(\tilde{s}|s_{E-1}^+,a=0)-P(\tilde{s}|s_{E-1}^-,a=0)\right]V_{k}(\tilde{s})}\right] \notag \\ &\hphantom{(1-P_s)}-\left[g(s_{E}^+)-g(s_{E}^-)+\gamma\sum_{\tilde{s}\in S}{\left[P(\tilde{s}|s_{E}^+,a=0)-P(\tilde{s}|s_{E}^-,a=0)\right]V_{k}(\tilde{s})}\right] \leq 0 \notag \\
	&\Leftrightarrow (1-P_s)\left[g(s_{E-1}^+)-g(s_{E-1}^-)\right]-\left[g(s_{E}^+)-g(s_{E}^-)\right] \notag \\ 
	&\hphantom{(1-P_s)}+\gamma\sum_{\tilde{s}\in S}\Big\{(1-P_s)\left[P(\tilde{s}|s_{E-1}^+,a=0)-P(\tilde{s}|s_{E-1}^-,a=0)\right] \notag \\
	&\hphantom{(1-P_s)+\gamma\sum_{\tilde{s}\in S}\ }-\left[P(\tilde{s}|s_{E}^+,a=0)-P(\tilde{s}|s_{E}^-,a=0)\right]\Big\}V_{k}(\tilde{s}) \leq 0.
\end{align}

We know that $g(s_{E-1}^+)=g(s_{E}^+)=(1-Z)\Delta^{0+}+Z(\Delta^{1+})^2$, $g(s_{E}^-)=g(s_{E-1}^-)=(1-Z)\Delta^{0-}+Z(\Delta^{1-})^2$. Additionally, we have $g(s_{E-1}^+)=g(s_{E}^+) \geq g(s_{E-1}^-)=g(s_{E}^-)$ since $\Delta^{0+} \geq \Delta^{0-}$ and $\Delta^{1+} \geq \Delta^{1-}$. Therefore, $(1-P_s)\left[g(s_{E-1}^+)-g(s_{E-1}^-)\right]-\left[g(s_{E}^+)-g(s_{E}^-)\right]=-P_s\left[g(s_{E}^+)-g(s_{E}^-)\right] \leq 0$. Now, we prove that the summation in \eqref{Lemma1_Case1_Ineq} is also negative. Simplifying this summation based on equation \eqref{V_k+1_a0} results in the following expression:

{\footnotesize 
	\begin{align}
		\label{Lemma1_Case1_Summ}
		\sum_{[W^e,W^z]}&\Big\{(1-P_s)\left[V_k\left(W^z,Z^d,E+W^e-1,\Delta^{0+}+1,\Delta^{1+}+1\right)-V_k\left(W^z,Z^d,E+W^e-1,\Delta^{0-}+1,\Delta^{1-}+1\right)\right] \notag \\
		&-\left[V_k\left(W^z,Z^d,E+W^e,\Delta^{0+}+1,\Delta^{1+}+1\right)-V_k\left(W^z,Z^d,E+W^e,\Delta^{0-}+1,\Delta^{1-}+1\right)\right]\Big\}P\left[W^e\right]P\left[W^z|Z\right] \leq 0.
	\end{align}
}

Let us define $\tilde{s}_E^+=\tilde{s}_E^+(W^e,W^z)=[W^z,Z^d,E+W^e,\Delta^{0+}+1,\Delta^{1+}+1]^T$ and $\tilde{s}_E^+=\tilde{s}_E^-(W^e,W^z)=[W^z,Z^d,E+W^e,\Delta^{0-}+1,\Delta^{1-}+1]^T$, then \eqref{Lemma1_Case1_Summ} can be rewritten:
\begin{align}
	\sum_{[W^e,W^z]}\left\{(1-P_s)\left[V_k\left(\tilde{s}_{E-1}^+\right)-V_k\left(\tilde{s}_{E-1}^-\right)\right] -\left[V_k\left(\tilde{s}_{E}^+\right)-V_k\left(\tilde{s}_{E}^-\right)\right]\right\}P\left[W^e\right]P\left[W^z|Z\right] \leq 0.
\end{align}
In accordance with the assumption stated in equation \eqref{Lemma1_iter_k}, $(1-P_s)\left[V_k\left(\tilde{s}_{E-1}^+\right)-V_k\left(\tilde{s}_{E-1}^-\right)\right] -\left[V_k\left(\tilde{s}_{E}^+\right)-V_k\left(\tilde{s}_{E}^-\right)\right] \leq 0$. As a result, the summation \eqref{Lemma1_Case1_Summ} is negative (non-positive). This observation concludes the proof for case 1.

\textit{\textbf{Case 2.}} $V_{k+1}^0(s_{E-1}^-) \leq V_{k+1}^1(s_{E-1}^-)$ and $V_{k+1}^0(s_{E}^+) > V_{k+1}^1(s_{E}^+)$.
In this case, equation \eqref{Lemma1_iter_k+1_min} is reduced to:
\begin{align}
	(1-P_s)&\left[\min\left\{V_{k+1}^0(s_{E-1}^+),V_{k+1}^1(s_{E-1}^+)\right\} -V_{k+1}^0(s_{E-1}^-)\right] \notag \\
	& -\left[V_{k+1}^1(s_{E}^+) -\min\left\{V_{k+1}^0(s_{E}^-),V_{k+1}^1(s_{E}^-)\right\}\right] \leq 0 \notag \\
	& \Leftrightarrow (1-P_s)\left[V_{k+1}^0(s_{E-1}^+)-V_{k+1}^0(s_{E-1}^-)\right]+\overbrace{(1-P_s)\min\left\{0,V_{k+1}^1(s_{E-1}^+)-V_{k+1}^0(s_{E-1}^+)\right\}}^{\leq 0}  \notag \\
	& -\left[V_{k+1}^1(s_{E}^+) -V_{k+1}^1(s_{E}^-)\right]+\underbrace{\min\left\{V_{k+1}^0(s_{E}^-)-V_{k+1}^1(s_{E}^-),0\right\}}_{\leq 0} \leq 0.
\end{align}

It is adequate to show that:
\begin{align}
	(1-P_s)&\left[V_{k+1}^0(s_{E-1}^+)-V_{k+1}^0(s_{E-1}^-)\right]-\left[V_{k+1}^1(s_{E}^+) -V_{k+1}^1(s_{E}^-)\right] \leq 0.
\end{align}
According to \eqref{V_k+1_a0} and \eqref{V_k+1_a1}, we have:
\begin{align}
	\label{Lemma1_Case2_Ineq}
	&(1-P_s)\left[g(s_{E-1}^+)-g(s_{E-1}^-)+\gamma\sum_{\tilde{s}\in S}{\left[P(\tilde{s}|s_{E-1}^+,a=0)-P(\tilde{s}|s_{E-1}^-,a=0)\right]V_{k}(\tilde{s})}\right] \notag \\ &\hphantom{(1-P_s)}-\left[g(s_{E}^+)-g(s_{E}^-)+\gamma\sum_{\tilde{s}\in S}{\left[P(\tilde{s}|s_{E}^+,a=1)-P(\tilde{s}|s_{E}^-,a=1)\right]V_{k}(\tilde{s})}\right] \leq 0 \notag \\
	&\Leftrightarrow \overbrace{(1-P_s)\left[g(s_{E-1}^+)-g(s_{E-1}^-)\right]-\left[g(s_{E}^+)-g(s_{E}^-)\right]}^{=-P_s\left[g(s_{E}^+)-g(s_{E}^-)\right]\leq 0} \notag \\ 
	&\hphantom{(1-P_s)}+\gamma\sum_{\tilde{s}\in S}\Big\{(1-P_s)\left[P(\tilde{s}|s_{E-1}^+,a=0)-P(\tilde{s}|s_{E-1}^-,a=0)\right] \notag \\
	&\hphantom{(1-P_s)+\gamma\sum_{\tilde{s}\in S}\ }-\left[P(\tilde{s}|s_{E}^+,a=1)-P(\tilde{s}|s_{E}^-,a=1)\right]\Big\}V_{k}(\tilde{s}) \leq 0.
\end{align}

We prove that the summation in \eqref{Lemma1_Case2_Ineq} is also non-positive. Rewriting this sum using the equations \eqref{V_k+1_a0} and \eqref{V_k+1_a1} yields the subsequent equation:
\begin{align}
	\sum_{[W^e,W^z]}(1-P_s)\Big\{\left[V_k\left(\tilde{s}_{E-1}^+\right)-V_k\left(\tilde{s}_{E-1}^-\right)\right] -\left[V_k\left(\tilde{s}_{E-1}^+\right)-V_k\left(\tilde{s}_{E-1}^-\right)\right]\Big\}P\left[W^e\right]P\left[W^z|Z\right] = 0,
\end{align}
\noindent and it concludes the proof for case 2.

\textit{\textbf{Case 3.}} $V_{k+1}^0(s_{E-1}^-) > V_{k+1}^1(s_{E-1}^-)$ and $V_{k+1}^0(s_{E}^+) \leq V_{k+1}^1(s_{E}^+)$\footnote{It is noteworthy that this case does not occur when $E=1$, as the action $a=0$ is optimal, and $V_{k+1}^0(s_{E-1}^-) = V_{k+1}^1(s_{E-1}^-)$.}.
In this case, equation \eqref{Lemma1_iter_k+1_min} is simplified to:
\begin{align}
	(1-P_s)&\left[\min\left\{V_{k+1}^0(s_{E-1}^+),V_{k+1}^1(s_{E-1}^+)\right\} -V_{k+1}^1(s_{E-1}^-)\right] \notag \\
	& -\left[V_{k+1}^0(s_{E}^+) -\min\left\{V_{k+1}^0(s_{E}^-),V_{k+1}^1(s_{E}^-)\right\}\right] \leq 0 \notag \\
	& \Leftrightarrow (1-P_s)\left[V_{k+1}^1(s_{E-1}^+)-V_{k+1}^1(s_{E-1}^-)\right]+\overbrace{(1-P_s)\min\left\{V_{k+1}^0(s_{E-1}^+)-V_{k+1}^1(s_{E-1}^+),0\right\}}^{\leq 0}  \notag \\
	& -\left[V_{k+1}^0(s_{E}^+) -V_{k+1}^0(s_{E}^-)\right]+\underbrace{\min\left\{0,V_{k+1}^1(s_{E}^-)-V_{k+1}^0(s_{E}^-)\right\}}_{\leq 0} \leq 0,
\end{align}
It is adequate to demonstrate that:
\begin{align}
	(1-P_s)&\left[V_{k+1}^1(s_{E-1}^+)-V_{k+1}^1(s_{E-1}^-)\right]-\left[V_{k+1}^0(s_{E}^+) -V_{k+1}^0(s_{E}^-)\right] \leq 0.
\end{align}
According to \eqref{V_k+1_a0} and \eqref{V_k+1_a1}:
\begin{align}
	\label{Lemma1_Case3_Ineq}
	&(1-P_s)\left[g(s_{E-1}^+)-g(s_{E-1}^-)+\gamma\sum_{\tilde{s}\in S}{\left[P(\tilde{s}|s_{E-1}^+,a=1)-P(\tilde{s}|s_{E-1}^-,a=1)\right]V_{k}(\tilde{s})}\right] \notag \\ &\hphantom{(1-P_s)}-\left[g(s_{E}^+)-g(s_{E}^-)+\gamma\sum_{\tilde{s}\in S}{\left[P(\tilde{s}|s_{E}^+,a=0)-P(\tilde{s}|s_{E}^-,a=0)\right]V_{k}(\tilde{s})}\right] \leq 0 \notag \\
	&\Leftrightarrow \overbrace{(1-P_s)\left[g(s_{E-1}^+)-g(s_{E-1}^-)\right]-\left[g(s_{E}^+)-g(s_{E}^-)\right]}^{=-P_s\left[g(s_{E}^+)-g(s_{E}^-)\right]\leq 0} \notag \\ 
	&\hphantom{(1-P_s)}+\gamma\sum_{\tilde{s}\in S}\Big\{(1-P_s)\left[P(\tilde{s}|s_{E-1}^+,a=1)-P(\tilde{s}|s_{E-1}^-,a=1)\right] \notag \\
	&\hphantom{(1-P_s)+\gamma\sum_{\tilde{s}\in S}\ }-\left[P(\tilde{s}|s_{E}^+,a=0)-P(\tilde{s}|s_{E}^-,a=0)\right]\Big\}V_{k}(\tilde{s}) \leq 0.
\end{align}

We prove that the sum in \eqref{Lemma1_Case3_Ineq} is likewise non-positive. By simplifying this sum using equations \eqref{V_k+1_a0} and \eqref{V_k+1_a1}, we arrive at the subsequent expression:

\begin{align}
	&\sum_{[W^e,W^z]}\Big\{(1-P_s)^2\left[V_k\left(\tilde{s}_{E-2}^+\right)-V_k\left(\tilde{s}_{E-2}^-\right)\right] -\left[V_k\left(\tilde{s}_{E}^+\right)-V_k\left(\tilde{s}_{E}^-\right)\right]\Big\}P\left[W^e\right]P\left[W^z|Z\right] \notag \\
	&=\sum_{[W^e,W^z]}\bigg\{\underbrace{(1-P_s)^2\left[V_k\left(\tilde{s}_{E-2}^+\right)-V_k\left(\tilde{s}_{E-2}^-\right)\right]-(1-P_s)\left[V_k\left(\tilde{s}_{E-1}^+\right)-V_k\left(\tilde{s}_{E-1}^-\right)\right]}_{(a)} \notag \\
	&\hphantom{\sum_{[W^e,W^z]}} + \underbrace{(1-P_s)\left[V_k\left(\tilde{s}_{E-1}^+\right)-V_k\left(\tilde{s}_{E-1}^-\right)\right] -\left[V_k\left(\tilde{s}_{E}^+\right)-V_k\left(\tilde{s}_{E}^-\right)\right]}_{(b)}\bigg\}P\left[W^e\right]P\left[W^z|Z\right], \notag
\end{align}

\noindent where both $(a)$ and $(b)$ are negative according to \eqref{Lemma1_iter_k}, thus concluding the proof for case 3.
\begin{align}
	(a)&=(1-P_s)\Big\{(1-P_s)\left[V_k\left(\tilde{s}_{E-2}^+\right)-V_k\left(\tilde{s}_{E-2}^-\right)\right]-\left[V_k\left(\tilde{s}_{E-1}^+\right)-V_k\left(\tilde{s}_{E-1}^-\right)\right]\Big\} \notag \\
	& \leq (1-P_s)\left[V_k\left(\tilde{s}_{E-2}^+\right)-V_k\left(\tilde{s}_{E-2}^-\right)\right]-\left[V_k\left(\tilde{s}_{E-1}^+\right)-V_k\left(\tilde{s}_{E-1}^-\right)\right] \leq 0.
\end{align}

\textit{\textbf{Case 4.}} $V_{k+1}^0(s_{E-1}^-) > V_{k+1}^1(s_{E-1}^-)$ and $V_{k+1}^0(s_{E}^+) > V_{k+1}^1(s_{E}^+)$\footnote{It is noteworthy that this case does not occur when $E=1$, as the action $a=0$ is optimal, and $V_{k+1}^0(s_{E-1}^-) = V_{k+1}^1(s_{E-1}^-)$.}.
In this case, equation \eqref{Lemma1_iter_k+1_min} is reduced to:
\begin{align}
	(1-P_s)&\left[\min\left\{V_{k+1}^0(s_{E-1}^+),V_{k+1}^1(s_{E-1}^+)\right\} -V_{k+1}^1(s_{E-1}^-)\right] \notag \\
	& -\left[V_{k+1}^1(s_{E}^+) -\min\left\{V_{k+1}^0(s_{E}^-),V_{k+1}^1(s_{E}^-)\right\}\right] \leq 0 \notag \\
	& \Leftrightarrow (1-P_s)\left[V_{k+1}^1(s_{E-1}^+)-V_{k+1}^1(s_{E-1}^-)\right]+\overbrace{(1-P_s)\min\left\{V_{k+1}^0(s_{E-1}^+)-V_{k+1}^1(s_{E-1}^+),0\right\}}^{\leq 0}  \notag \\
	& -\left[V_{k+1}^1(s_{E}^+) -V_{k+1}^1(s_{E}^-)\right]+\underbrace{\min\left\{V_{k+1}^0(s_{E}^-)-V_{k+1}^1(s_{E}^-),0\right\}}_{\leq 0} \leq 0.
\end{align}

It suffices to demonstrate that:
\begin{align}
	(1-P_s)&\left[V_{k+1}^1(s_{E-1}^+)-V_{k+1}^1(s_{E-1}^-)\right]-\left[V_{k+1}^1(s_{E}^+) -V_{k+1}^1(s_{E}^-)\right] \leq 0.
\end{align}

According to \eqref{V_k+1_a1}:

\begin{align}
	\label{Lemma1_Case4_Ineq}
	&(1-P_s)\left[g(s_{E-1}^+)-g(s_{E-1}^-)+\gamma\sum_{\tilde{s}\in S}{\left[P(\tilde{s}|s_{E-1}^+,a=1)-P(\tilde{s}|s_{E-1}^-,a=1)\right]V_{k}(\tilde{s})}\right] \notag \\ &\hphantom{(1-P_s)}-\left[g(s_{E}^+)-g(s_{E}^-)+\gamma\sum_{\tilde{s}\in S}{\left[P(\tilde{s}|s_{E}^+,a=1)-P(\tilde{s}|s_{E}^-,a=1)\right]V_{k}(\tilde{s})}\right] \leq 0 \notag \\
	&\Leftrightarrow \overbrace{(1-P_s)\left[g(s_{E-1}^+)-g(s_{E-1}^-)\right]-\left[g(s_{E}^+)-g(s_{E}^-)\right]}^{=-P_s\left[g(s_{E}^+)-g(s_{E}^-)\right]\leq 0} \notag \\ 
	&\hphantom{(1-P_s)}+\gamma\sum_{\tilde{s}\in S}\Big\{(1-P_s)\left[P(\tilde{s}|s_{E-1}^+,a=1)-P(\tilde{s}|s_{E-1}^-,a=1)\right] \notag \\
	&\hphantom{(1-P_s)+\gamma\sum_{\tilde{s}\in S}\ }-\left[P(\tilde{s}|s_{E}^+,a=1)-P(\tilde{s}|s_{E}^-,a=1)\right]\Big\}V_{k}(\tilde{s}) \leq 0.
\end{align}

We demonstrate that the sum in \eqref{Lemma1_Case4_Ineq} is also non-positive. Simplifying the summation using the equation \eqref{V_k+1_a1} yields the subsequent expression:
{\small
	\begin{align}
		&\sum_{[W^e,W^z]}\Big\{(1-P_s)^2\left[V_k\left(\tilde{s}_{E-2}^+\right)-V_k\left(\tilde{s}_{E-2}^-\right)\right] -(1-P_s)\left[V_k\left(\tilde{s}_{E-1}^+\right)-V_k\left(\tilde{s}_{E-1}^-\right)\right]\Big\}P\left[W^e\right]P\left[W^z|Z\right] \leq 0,
	\end{align}
}
\noindent and the proof for case 4 and Lemma 1 is completed.
\end{proof}

\end{document}